\theoremstyle{thmstyleone}
\newtheorem{theorem}{Theorem}
\newtheorem{proposition}[theorem]{Proposition}
\newtheorem{lemma}[theorem]{Lemma}
\theoremstyle{thmstyletwo}
\theoremstyle{thmstylethree}
\newtheorem{definition}{Definition}
\begin{document}

\title{Event-based quantum contextuality theory}

\author[1]{\fnm{Songyi} \sur{Liu}}\email{liusongyi@buaa.edu.cn}

\author*[1]{\fnm{Yongjun} \sur{Wang}}\email{wangyj@buaa.edu.cn}

\author[1]{\fnm{Baoshan} \sur{Wang}}\email{bwang@buaa.edu.cn}

\author[1]{\fnm{Yunyi} \sur{Jia}}\email{by2309005@buaa.edu.cn}

\affil*[1]{\orgdiv{School of Mathematical Sciences}, \orgname{Beihang University}, \orgaddress{ \city{Beijing}, \postcode{100191}, \country{China}}}

\abstract{Fully revealing the mathmatical structure of quantum contextuality is a significant task, while some known contextuality theories are only applicable for rank-1 projectors. That is because they adopt the observable-based definitions. This paper overcomes the challenges faced by some known contextuality theories by establishing an event-based contextuality theory with exclusive partial Boolean algebra, which is used to describe the contextual systems with local consistency and exclusivity principle. Our theory provides a precise mathematical framework for quantum contextuality, which can handle the scenarios composed of general projectors, and introduces a more complete contextuality hierarchy. We conclude that the Kochen-Specker contextuality is equivalent to the state-independent strong contextuality for finite dimensional quantum systems. Therefore, when considering both the strength and proportion of contextual quantum states, Kochen-Specker contextuality is the strongest.}

\keywords{Quantum contextuality, Partial Boolean algebra, Exclusivity graph approach, Sheaf theory approach}

\maketitle
\tableofcontents
\section*{Declarations}

\bmhead{Competing interests}
The authors have no relevant financial or non-financial interests to disclose.

\section{Introduction}

The nonexistence of hidden-variables in quantum mechanics was proved by Bell nonlocality\citep{Bell1964On} and Kochen-Specker (KS) theorem\citep{Kochen1967The} in 1960s. The Bell nonlocality has inspired the research in applications such as quantum computation, quantum communication channel and quantum cryptography\citep{Brunner2014Bell}, while the Kochen-Specker theorem pioneers a field later known as the contextuality theory\citep{Budroni2022Kochen}. In 2008, the Klyachko-Can-Binicioglu-Shumovsky (KCBS) experiment\citep{Alexander2008Simple} verified that the contextuality is more fundamental than the nonlocality as a quantum feature. Quantum contextuality constitutes a critical treasure trove of resources in quantum mechanics\citep{Mark2014Contextuality}. \par

Fully revealing the mathematical structure of quantum contextuality is a tough and significant task\citep{Budroni2022Kochen}. There are two perspectives for it: observable perspective (OP) and effect perspective (EP). OP focuses on the ideal measurements, with the sheaf theory approach\citep{Abramsky2011sheaf} and exclusivity graph approach\citep{Adan2014Graph} as representative examples. EP, such as the compatibility hypergraph\citep{Acin2015A}, focuses on the nonideal measurements. The two perspectives are equivalent because non-ideal measurements can be realized by the ideal measurements on the higher dimensional Hilbert space. We adopt the observable perspective in this paper.\par

To align with physical intuition, the most theories are designed to be observable-based. In simple terms, if a theory defines events with observables rather than defines observables with events, we call it an ``observable-based contextuality theory". For details, an event is usually defined as joint outcomes $(A_1=a_1,A_2=a_2,...,A_n=a_n)$, where $A_1,A_2...,A_n$ are compatible observables.\par

The observable-based contextuality theories have attained great accomplishments. The exclusivity graph is applied to find the minimal vectors set with Kochen-Specker assignments\citep{Xu2020Proof} and the minimal vectors set with state-independent contextuality (SIC)\citep{Cabello2016Quantum}, but these conclusions are only made in the case of rank-1 projectors (the projectors onto one-dimensional spaces). The sheaf theory approach provides a mathematical framework for the hierarchy of quantum contextuality\citep{Abramsky2011sheaf}, which includes probabilistic, logical and strong contextuality to abstract the Bell\citep{Bell1964On}, Hardy\citep{Hardy1993Nonlocality} and GHZ\citep{Greenberger1989Going} (Greenberger, Horne and
Zeilinger)-type proof of KS theorem. However, the hierarchy faces challenges in definition in the case of general scenarios, and the recent nonclassical models, such as the Yu-Oh set\citep{Yu2012State}, have not yet been taken into account\citep{Budroni2022Kochen}.\par

From a mathematical prospective, using the observable-based definitions is risky, because the fundamental objects of probability theories, whether classical or quantum, are the event algebras. The measurement outcomes of observables do not fully capture the structure of events. For example, let $|0\rangle=(0,0,1)$, $|1\rangle=(0,1,0)$, $|2\rangle=(1,0,0)$, $|x\rangle=(1/\sqrt{2},1/\sqrt{2},0)$, $|y\rangle=(1/\sqrt{2},-1/\sqrt{2},0)$, $\hat{A}=a_0|0\rangle\langle0|+a_1|1\rangle\langle1|+a_2|2\rangle\langle2|$, and $\hat{B}=b_0|0\rangle\langle0|+b_1|x\rangle\langle x|+b_2|y\rangle\langle y|$. Then events $(\hat{A}=a_0)$ and $(\hat{B}=b_1)$ are exclusive. However, they are not defined to be exclusive in the observable-based contextuality theories. We believe that it is the main reason why revealing the mathematical structure of quantum contextuality encounters challenges.\par

To provide a precise mathematical framework for quantum contextuality, we use the exclusive partial Boolean algebra to establish an event-based contextuality theory. The partial Boolean algebra ($pBA$) is introduced by Kochen and Specker\citep{Kochen1967The} to present the original proof of KS theorem. In 2015, Kochen initiated the works of reconstructing the foundation of quantum mechanics with $pBA$\citep{Kochen2015Reconstruction}. \par

$pBA$ naturally satisfies the local consistency\citep{Abramsky2015Contextuality,Ramanathan2012Generalized}, also known as non-signaling in Bell experiments\citep{Popescu1994Quantum}, which means that the probability distributions on different contexts induce no contradiction. However, $pBA$ may violate the exclusivity principle\citep{Fritz2013Local,Adan2012Specker,Abramsky2020The}, which means that the probability sum of exclusive events is not more than 1. In 2021, Abramsky and Barbosa generalize the exclusivity principle to the $pBA$, and get the exclusive partial Boolean algebra ($epBA$)\citep{Abramsky2020The}. An $epBA$ is determined by its atom graph in the finite dimensional cases\citep{Liu2025Atom}, which provides a concise and precise approach to describe the contextual systems satisfying local consistency and exclusivity principle. It is worth noting that the observable-based contextuality theories usually do not consider the two principles simultaneously, which leads to the possibility of violating one of the them.\par

In Sect. \ref{sec-obct} of this paper, we introduce the basic contextuality notions and two advanced observable-based contextuality theories: the sheaf theory approach and the exclusivity graph approach, analyzing the challenges faced by them. In Sect. \ref{Event-based contextuality theory}, we develop the work of \cite{Abramsky2020The} and \cite{Liu2025Atom} to establish a unified, event-based mathematical theory for quantum contextuality. The correctness of our theory is guaranteed by the theorem \ref{thm-contextuality}. In Sect. \ref{sec-quantum-contextuality}, we formalize a number of significant notions of quantum contextuality, and depict the major quantum scenarios and models. In Sect. \ref{sec-hierarchy}, we prove that the KS contextuality is equivalent to the state-independent strong contextuality for the finite dimensional quantum systems by the theorem \ref{KS-SISC}, which presents a more complete two-dimensional hierarchy for quantum contextuality. In Sect. \ref{sec-conclusion}, we summarize our work and list some questions for further research.

\section{Observable-based contextuality theories}\label{sec-obct}

\subsection{Fundamental concepts}

If a system satisfies the hidden-variable theory, all the observables can be assigned values simultaneously by the hidden-variables $\lambda$, and the randomness of the system is exactly the randomness of $\lambda$.\par

For details, let $X=\{A_1,...,A_n\}$ be a set of observables, and $M$ a family of subsets of $X$. $C\in M$ is called a context, usually defined to be the jointly commeasurable observables. $O_C$ is the set of joint outcomes of $C$, and $e\in O_C$ is called an event. Let $\{p_C:O_C\to[0,1]\ |\ C\in M\}$ be a set of probability distributions over all the contexts. $\{p_C\}$ is said to be a non-contextual hidden-variable model (NCHV model) if there exist hidden-variables $\lambda\in\Lambda$ such that for any context $C$, $p_C(e)=\sum\limits_{\lambda\in\Lambda}p_{\Lambda}(\lambda)\delta(e|\lambda)$, where $e\in O_C$, $p_{\Lambda}$ is a probability distribution on $\Lambda$ and $\delta(e|\lambda)$ is the deterministic conditional probability of $e$ satisfying the factorizability property\citep{Abramsky2011sheaf}: $\delta(e_1\land e_2|\lambda)=\delta(e_1|\lambda)\delta(e_2|\lambda)$ if $e_1,e_2$ are outcomes of a same context. Furthermore, if there exists a $\{p_C\}$ in a system which is not a NCHV model, then the system admits no hidden-variable, in other words, it witnesses the contextuality.\par

In some versions, the deterministic $\delta(e|\lambda)$ is replaced by the probabilistic $p(e|\lambda)=p((A_i=a_i|A_i\in C)|\lambda)=p(\prod\limits_{A_i\in C}p(A_i=a_i|\lambda))$. The two definitions are equivalent, because the existence of $\delta(e|\lambda)$ is equivalent with the existence of $p(e|\lambda)$\citep{Budroni2022Kochen}. In fact, the hidden-variable $\lambda$ has been proved to be exactly the global value-assignment to the all observables $X$\citep{Abramsky2011sheaf}. Therefore, we adopt the deterministic definition for simplicity.\par

The notions above present the classical contextuality theory, which underlies the marginal problem definition\citep{Fritz2013Entropic}. Different contextuality theories use different terminology and definitions. The terminological multiplicity often confuses readers and researchers in this field\citep{Abramsky2023Quantum}. For example, the observables set $X$ and contexts set $M$ describe the structure of systems. They are called ``marginal scenario" in the marginal problem definition and ``graphical measurement scenario" in the sheaf theory approach. In the exclusivity graph approach, they are described by the exclusivity graphs. For unifying the terminology, we use ``scenario" in this paper.\par

Another difference among contextuality theories is reflected in the constraints for $\{p_C\}$. The classical definition of $\{p_C\}$ imposes no restrictions other than the probability distribution, which is so general that it is not applicable to the research of quantum mechanics. Therefore, most contextuality theories impose constraints known as ``quantum principles" on $\{p_C\}$ to bring it closer to quantum states. Two of the most important principles are the local consistency and exclusivity principle.

\subsection{Local consistency and exclusivity principle}

For the set $\{p_C\}$, if for any two contexts $C$ and $C'$, $p_C|_{C\cap C'}=p_{C'}|_{C\cap C'}$,  then $\{p_C\}$ is said to satisfy local consistency in \cite{Abramsky2015Contextuality}, or non-disturbance in \cite{Ramanathan2012Generalized}. In Bell experiments, the principle is called no-signalling\citep{Popescu1994Quantum}, which means that information cannot be transmitted faster than the speed of light. In general scenarios, it means that the same event in different contexts can only be assigned a consistent probability.\par

If two events cannot occur simultaneously, they are said to be exclusive to each other, which is usually defined by the distinction between outcomes of observables in observable-based contextuality theories. Furthermore, if for any set of mutually exclusive events $E$ we have $\sum\limits_{e\in E}p_{C}(e)\leq 1$, where $e\in O_{C}$, then $\{p_C\}$ is said to satisfy exclusivity principle, also known as local orthogonality in \cite{Fritz2013Local}, Specker's exclusivity principle in \cite{Adan2012Specker}, probabilistic exclusivity principle in \cite{Abramsky2020The} and so on.\par

Local consistency and exclusivity principle are satisfied by quantum mechanics, and they are usually seen as features of quantum states. However, we claim that the two principles are actually the features of the structure of quantum systems. As long as a system possesses specific structure, any probability distributions on it will satisfy the two principles, regardless of how these probability distributions are induced. This conclusion addresses how to establish a ``natural" contextuality theory for quantum mechanics. We will elaborate on these details in section \ref{Event-based contextuality theory}.

The status of local consistency and exclusivity principle varies across different contextuality theories. In the marginal problem definition and sheaf theory approach, a set $\{p_C\}$ satisfying local consistency is referred to as a model, while in the exclusivity graph approach, the models are defined to satisfy exclusivity principle. Therefore, the models in these theories may violate one of the two principles. In the following subsections, we introduce two of these known observable-based contextuality theories, and explain the challenges they face.\par

\subsection{Challenges within the sheaf theory approach}

The sheaf theory approach is introduced by Abramsky and Brandenburger to get a mathematical framework of contextuality theory\citep{Abramsky2011sheaf}, whose definitions have undergone several refinements\citep{Abramsky2017Contextual,Abramsky2020The}. The fundamental concepts are defined below.

\begin{definition}
 Let $X$ be a set of observables. $\frown$ is the commeasurable relation on $X$. $O=\{O_x\}_{x\in X}$ is a family of sets, where $O_x$ is the outcomes of $x$. Then triple $(X, \frown, O)$ is called a graphical measurement scenario. A clique $C\subseteq X$ of relation $\frown$ is called a context. $Kl(\frown)$ denotes the set of contexts.\label{sheaf1}
\end{definition}

\begin{definition}
 Define $\xi(C):=\prod_{x\in C}O_x$. Let $\{p_C:\xi(C)\to[0,1]\ |\ C\in Kl(\frown)\}$ be probability distributions over the contexts. If $\{p_C\}$ satisfies local consistency, that is, for any $C'\subseteq C$, $p_{C'}=p_C|_{C'}$, then call $\{p_C\}$ an empirical model.\label{sheaf2}
\end{definition}

Therefore, the scenarios are depicted by triples $(X, \frown, O)$, contexts set $M=Kl(\frown)$, and models are defined to satisfy local consistency.

The sheaf theory approach has conducted extensive research on contextuality and provided a hierarchy. We only introduce the strong contextuality here, which is used to abstract the GHZ-type proof of KS theorem. The definitions are shown below.

\begin{definition}
An empirical model $\{p_C\}$ is said to be non-contextual if there is a global probability distribution $d$ on $\xi(X)=\prod_{x\in X}O_x$ such that $d|_{C}=p_C$ for any context $C\in Kl(\frown)$,\label{sheaf3}
\end{definition}

Note that \cite{Abramsky2011sheaf} has proved that the existence of hidden-variable $\lambda$ is equivalent with the existence of global probability distribution $d$, so the definition \ref{sheaf3} is consistent with the NCHV models.

\begin{definition}
An empirical model $\{p_C\}$ is said to be strongly contextual if for any global event $s\in\xi(X)$, there exists a context $C\in Kl(\frown)$ such that $p_C(s|_C)=0$.\label{sheaf4}
\end{definition}

The GHZ-state\citep{Greenberger1989Going} and any quantum state on Kochen-Specker set are strongly contextual. We will elaborate on them in section \ref{sec-quantum-contextuality}.

Although the sheaf theory approach provides a precise mathematical framework, the observable-based definitions pose significant risks to it. For example, the definition \ref{sheaf2} of empirical model requires local consistency, but it may be not "really" locally consistent in practical applications. Let $\hat{A}=a_0|0\rangle\langle0|+a_1|1\rangle\langle1|+a_2|2\rangle\langle2|$ and $\hat{B}=b_0|0\rangle\langle0|+b_1|x\rangle\langle x|+b_2|y\rangle\langle y|$ as before. Then $\hat{A}$ and $\hat{B}$ are not commeasurable but $\hat{A}=a_0$ is equivalent to $\hat{B}=b_0$. In the graphical measurement scenario, $X=\{\hat{A},\hat{B}\}$, $Kl(\frown)=\{C_A,C_B\}$ where $C_A=\{\hat{A}\}$ and $C_B=\{\hat{B}\}$. Therefore, any probability distributions $\{p_{C_A},p_{C_B}\}$ are empirical models. However, $\{p_{C_A},p_{C_B}\}$ is not really locally consistent when $p_{C_A}(\hat{A}=a_0)\neq p_{C_B}(\hat{B}=b_0)$.\par

Furthermore, the rigor of definitions of contextuality is undermined. We take the strong contextuality as an example. The CEG (Cabello, Estebaranz and Garc$\acute{\i}$a-Alcaine) set\citep{Cabello1997Bell} is the simplest KS vector set\citep{Xu2020Proof}, consisting of 18 unit vectors in $\mathbbm{R}^4$ and 9 contexts. Each context contains 4 mutually orthogonal vectors. The scenario is shown in the Fig.\ref{CEG}, where the vectors upon vertices are the non-normalized coordinates for corresponding projectors, and a straight line or a circumference represents a context.

\begin{figure}[H]
    \centering
    \includegraphics[width=0.5\linewidth]{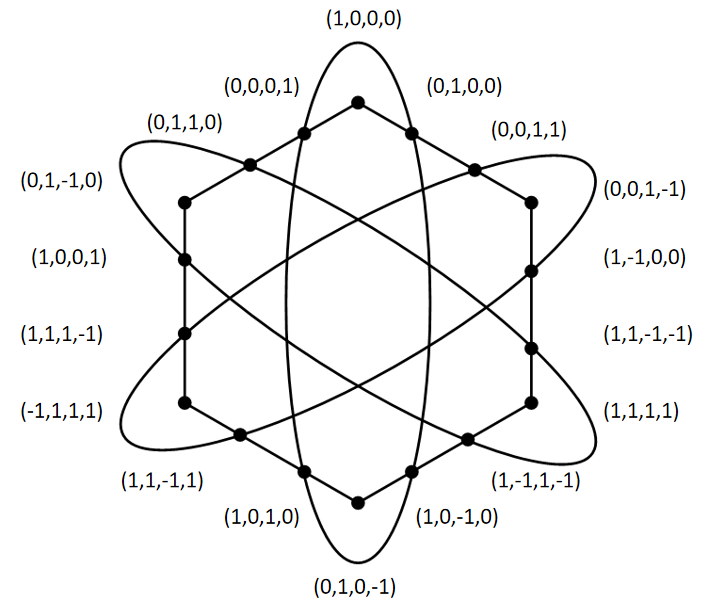}
    \caption{The CEG set}
    \label{CEG}
\end{figure}

Every vector corresponds to a rank-1 projector, or a 0-1 observable. Denote these vectors $v_i$ $(i=1,...,18)$. Then $X=\{|v_i\rangle\langle v_i|\}_{i=1}^{18}$. Each global event $s\in\xi(X)$ make a 0-1 value-assignment to $X$. However, the CEG set is a KS set, which means it's impossible to assign a 0-1 value to each vector, such that exactly one vector is assigned 1 in every context. Therefore, there is at least one context $C'$ such that all $|v\rangle\langle v|\in C'$ are assigned 0 or more than two $|v\rangle\langle v|\in C'$ are assigned 1. On the other hand, each quantum state $\rho$ induces an empirical model $\{\rho_C\}$ such that only assignments 0001, 0010, 0100 and 1000 are possible (that is, $\rho_C>0$) for all contexts $C$. Thus $\rho_{C'}(s|_{C'})=0$. In conclusion, any quantum states $\rho$ are strongly contextual on CEG set.\par

However, one can depict the same scenario in another way. Notice that each context $C_j$ $(j=1,...,9)$ corresponds to a complete observable $\hat{A}_j=\lambda_1|v_{j_1}\rangle\langle v_{j_1}|+\lambda_2|v_{j_2}\rangle\langle v_{j_2}|+\lambda_3|v_{j_3}\rangle\langle v_{j_3}|+\lambda_4|v_{j_4}\rangle\langle v_{j_4}|$. In quantum mechanics, observables $\{\hat{A}_j\}_{j=1}^{9}$ and projectors $\{|v_i\rangle\langle v_i|\}_{i=1}^{18}$ establish the same scenario. Let $X=\{\hat{A}_j\}_{j=1}^{9}$. It is clear that any pairs of these observables are not commeasurable, so there are total 9 contexts $C_1=\{\hat{A}_1\}, ..., C_9=\{\hat{A}_9\}\in Kl(\frown)$. The global event $s=(\hat{A}_1,..., \hat{A}_9=\lambda_1,...,\lambda_1)$ is restricted to $s|_{C_j}=(\hat{A}_j=\lambda_1)$ $(j=1,...,9)$. It is easy to find a quantum state to make all the $s|_{C_j}$ are possible. For instance, the superposition state $\rho$ of (non-normalized) $(1,0,0,0)$, $(0,1,0,0)$, $(1,1,1,1)$, $(1,-1,1,-1)$, $(-1,1,1,1)$ and $(1,1,1,-1)$ (note that these vectors associate to all the 9 contexts $C_j$, so simply let them correspond to the events $(\hat{A}_j=\lambda_1)$, $j=1,...,9$). Therefore, $\rho_{C_j}(s|_{C_j})>0$ for $j=1,...,9$, so $\rho$ is not strong contextual, which derives a contradiction.\par

These problems indicate that the sheaf theory approach for contextuality is not well-defined. In fact, the sheaf theory approach is applicable only when $X$ consists of only rank-1 projectors.\par

\subsection{Challenges within the exclusivity graph approach}\label{subsec-ex-graph}

The exclusivity graph approach\citep{Adan2010Contextuality,Adan2014Graph} is introduced by Cabello et al. The relevant concepts are defined below.\par

\begin{definition}
Let $X=\{A_1,...,A_n\}$ be a set of observables, $C_1,C_2\in M$ two contexts. Then $e_1\in O_{C_1}$ and $e_2\in O_{C_2}$ are said to be exclusive if they differ in an observable $A\in C_1\cap C_2$. \label{exclusivity1}
\end{definition}

\begin{definition}
An exclusivity graph is a simple graph whose vertices set is $\{e\in O_C|C\ is\ maximal\ in\ M\}$, and $e_1,e_2$ are adjoint iff $e_1,e_2$ are exclusive.\label{exclusivity2}
\end{definition}

\begin{definition}
An model $p$ on an exclusivity graph $G$ is defined by a function $p:V(G)\to[0,1]$ such that $\sum\limits_{v\in C}p(v)\leq 1$ for any clique $C\subseteq V(G)$.\label{exclusivity3}
\end{definition}

Therefore, the scenarios are depicted by graphs, contexts are depicted by cliques and models are defined to satisfy exclusivity principle.

The exclusivity graph provides a powerful tool for the research on finite quantum systems, which can be described by graphs. Its definition of contextuality is that

\begin{definition}
An model $p$ on an exclusivity graph $G$ is said to be non-contextual if $p$ is a convex combination of $0-1$ models $\delta:V(G)\to\{0,1\}$.\label{exclusivity4}
\end{definition}

A rough proof was provided in \cite{Adan2010Contextuality} and \cite{Adan2014Graph} that the definition \ref{exclusivity4} is consistent to the NCHV models. In other words, the NCHV models are exactly the convex combination of deterministic models. We will give a precise proof of this conclusion in the theorem \ref{thm-contextuality}.

It should be pointed out that the exclusivity graph approach aimed to define events as an equivalence class, in other words, two events are equivalent if their probability are equal in either state\citep{Adan2014Graph}. However, the definition regularly does not work in theoretical practice, because in the observable-based theory, $X$ does not induce equivalent events like $\hat{A}$ in quantum mechanics without additional conditions. Therefore, the exclusivity graph approach sometimes provides incorrect exclusive relations.\par

For instance, the Klyachko-Can-Binicio$\breve{g}$lu-Shumovsky (KCBS) experiment\citep{Alexander2008Simple} considered five observables $\{\hat{A}_0,...,\hat{A}_4\}$, where $\hat{A}_i=2\hat{S}_{l_i}^2-1$, $\hat{S}_{l_i}$ is the spin projection operators onto directions $l_i$, and $l_i\bot l_{i+1}$ with sum modulo 5. Therefore, if $\hat{A}_i=-1$, then $\hat{A}_{i+1}=1$. In other words, events $(\hat{A}_i=-1)$ and $(\hat{A}_{i+1}=-1)$ are exclusive. But in exclusivity graph approach, they are not considered to be exclusive. In fact, the observables in KCBS experiment form five contexts, $\{\hat{A}_0,\hat{A}_1\}$, ..., $\{\hat{A}_4,\hat{A}_0\}$. Each context has 4 events, and the exclusivity graph of all the events is the Fig. \ref{2014}\citep{Adan2014Graph}.

\begin{figure}[H]
    \centering
    \includegraphics[width=0.5\linewidth]{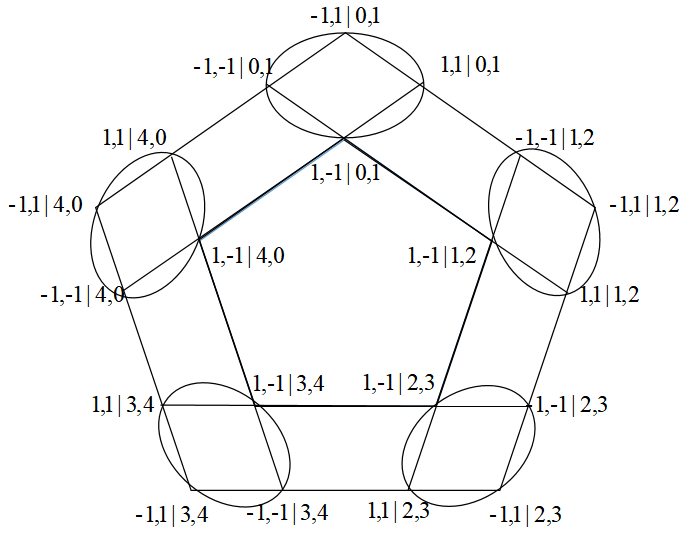}
    \caption{The exclusivity graph of KCBS experiment}\label{2014}
\end{figure}

In Fig.\ref{2014}, $x,y|i,j$ corresponds the event $(\hat{A}_i=x,\hat{A}_j=y)$, and the same straight line or circumference represent sets of pairwise exclusive events. However, for KCBS experiment, the vertices $-1,-1|i,i+1$ are impossible, and the vertices $-1,1|i,i+1$, $1,-1|i-1,i$ are equivalent. It means that Fig.\ref{2014} shows a deviation in depicting KCBS experiment.\par

Such a issue can also explain why the exclusivity graph approach does not consider local consistency. As in the case of sheaf theory approach, $\hat{A}=a_0|0\rangle\langle0|+a_1|1\rangle\langle1|+a_2|2\rangle\langle2|$ and $\hat{B}=b_0|0\rangle\langle0|+b_1|x\rangle\langle x|+b_2|y\rangle\langle y|$ are not commeasurable but $\hat{A}=a_0$ is equivalent to $\hat{B}=b_0$. The exclusivity graph $G$ consists of two disjoint cliques $C_A=\{(\hat{A}=a_0),(\hat{A}=a_1),(\hat{A}=a_2)\}$ and $C_B=\{(\hat{B}=b_0),(\hat{B}=b_1),(\hat{B}=b_2)\}$. Thus any probability distributions $p$ on $G$ are a model, but $p$ is not locally consistent when $p(\hat{A}=a_0)\neq p(\hat{B}=b_0)$. In reality, one cannot determine if a model on an exclusivity graph is locally consistent without additional information of the events.\par

The exclusivity graph approach can depict correctly the scenarios only when all the observables are complete and do not lead to equivalent events. Therefore, the applicability of exclusivity graph is limited to the scenarios generated by rank-1 projectors.

\section{Event-based contextuality theory}\label{Event-based contextuality theory}

We have shown that the observable-based contextuality theories encounter challenges when describing the scenarios composed of general observables. This is the reason why these known theories primarily focus on rank-1 scenarios. One can refine these theories by introducing additional constrains, but we believe a more essential approach is to substitute the event-based definition for the observable-based one.\par

Let's consider the classical probability theory. The event-algebra is derived from the sample space, and the observable (that is, the random variable) is defined by the events. Therefore, the fundamental structure of a probability theory is the event-algebra, not observable. It is especially true for quantum mechanics, where the hidden-variable spaces do not exist\citep{Kochen1967The}. \par

In the section, we establish the event-based contextuality theory.

\subsection{Exclusive partial Boolean algebra}

Hidden-variable theory is based on Boolean algebra. The well-known axiomatized classical probability theory is based on $\sigma$ algebra, which is also a type of Boolean algebra. Boolean algebra characterizes the logic of classical world, so it can be utilized to describe the single context in contextuality theory.\par

In quantum mechanics, multiple contexts correspond to multiple Boolean algebras. Kochen and Specker introduce the partial Boolean algebra to explain how these Boolean algebras form the global system. The definition below is from \cite{Van2012Noncommutativity} and \cite{Abramsky2020The}.

\begin{definition}
    If $B$ is a set with
    \begin{itemize}
    \item a reflexive and symmetric binary relation $\odot\subseteq B\times B$,
    \item a (total) unary operation $\lnot:\ B\rightarrow B$,
    \item two (partial) binary operations $\land,\ \lor:\ \odot\rightarrow B$,
    \item elements $0,1 \in B$,
    \end{itemize}
    satisfying that for every subset $S\subseteq B$ such that $\forall a,\ b\in S,\ a\odot b$, there exists a Boolean subalgebra $C\subseteq B$ determined by $(C,\land,\lor,\lnot,0,1)$ and $S\subseteq C$, then $B$ is called a \textbf{partial Boolean algebra}, written by $(B,\odot)$, or $(B,\odot;\land,\lor,\lnot,0,1)$ for details.\par
    Use $pBA$ to denote the collection of all partial Boolean algebras.
\end{definition}

The abbreviation $pBA$ is adopted from \cite{Abramsky2020The}, where $pBA$ represents the category of partial Boolean algebras. The relation $\odot$ represents the compatibility. $a\odot b$ if and only if $a,b$ belong to a Boolean subalgebra. Therefore, the contexts of scenario will be defined later by the Boolean subalgebras of $B$.\par

To depict the probability distributions on $pBA$, the notion state is defined below.

\begin{definition}
If $B\in pBA$,  then a \textbf{state} on $B$ is defined by a map $p:B\rightarrow[0,\ 1]$ such that
 \begin{itemize}
    \item $p(0)=0$.
    \item $p(\neg x)=1-p(x)$.
    \item for all $x,y\in B$ with $x\odot y$,$\ p(x\lor y)+p(x\land y)=p(x)+p(y)$.
 \end{itemize}
A state is called a \textbf{0-1 state} if its range is $\{0,1\}$. Use $s(B)$ to denote the states set on $B$, and $s_{01}(B)$ the 0-1 states set of $B$.\label{def_state}
\end{definition}

A 0-1 state is exactly a homomorphism from $B$ to $\{0,1\}$, that is, a truth-values assignment. Definition \ref{def_state} only requires finite additivity because $pBA$ only demands closure under finite unions.

Partial Boolean algebra was introduced to prove the well-known Kochen-Specker theorem\citep{Kochen1967The}. The most significant advantage of $pBA$ is that it captures the local consistency. If $B\in pBA$, $e\in B$ and $p\in s(B)$, we have $p_C(e)=p|_C(e)=p(e)$ for all the Boolean subalgebras $C$. Conversely, if $\{p_C\}$ are probability distributions satisfying local consistency, then there exists $B\in pBA$ and $p\in s(B)$ such that $p|_C=p_C$ for all contexts $C$. Therefore, the local consistency (including the no-signalling) is actually the feature of quantum system's structure, not just the quantum states. In quantum mechanics, each projector $\hat{P}$ onto a Hilbert space $\mathcal{H}$ is constant, which deduces that the probability $p(\hat{P})$ is independent with the contexts, no matter if $p$ is a quantum state or not.

However, $pBA$ may be really unnatural. To explain it, we introduce the definition of exclusive events in $pBA$.

\begin{definition}
If $B\in pBA$, $a,b\in B$, then,
 \begin{itemize}
    \item $a\leq b$ is defined by $a\odot b$ and $a\land b=a$;
    \item $a,b$ are said to be exclusive, written $a\bot b$, if there exists an element $c\in B$ such that $a\leq c$ and $b\leq\neg c$.
 \end{itemize}
 \label{def-exclusive}
\end{definition}

The definition \ref{def-exclusive} encompasses the definition in observable-based theories. For example, consider two (observable-based) events $a=(A_1=0,A_2=0)$ and $b=(A_2=1,A_3=0)$. We have $a\leq(A_2=0)$ and $b\leq(A_2=1)$. Thus $a\bot b$. Notice that the definition \ref{def-exclusive} is independent of the forms of observables. If $a=(A_1=0,A_2=0)$ and $b=(B_1=0,B_2=0)$, but $(A_1=0)$ is exclusive to $(B_1=0)$, then $a$ and $b$ are also exclusive.

Next we give an example of ``unnatural" $pBA$. Consider $B_1$ and $B_2\in pBA$. They are similar to each other and illustrated in the Fig.\ref{pB1} and Fig.\ref{pB2}, where the lines represent the partial orders incompletely.

\begin{figure}[H]
      \begin{minipage}[t]{0.5\linewidth}
          \centering
          \includegraphics[width=0.9\linewidth]{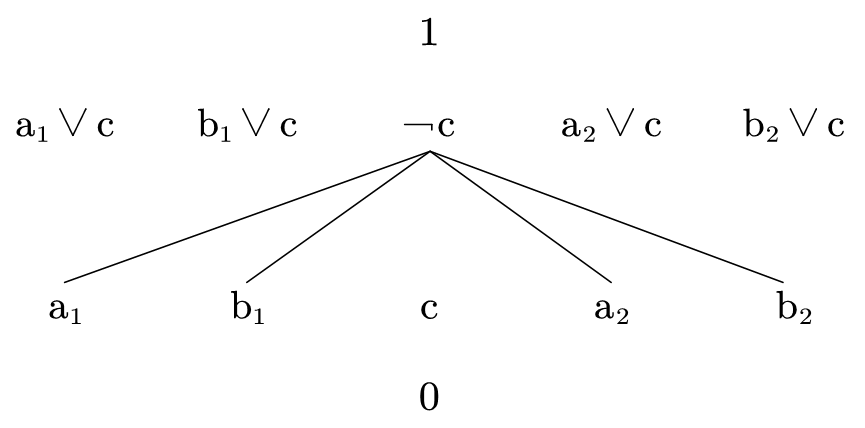}
          \caption{$B_1$}\label{pB1}
      \end{minipage}
      \begin{minipage}[t]{0.5\linewidth}
          \includegraphics[width=0.9\linewidth]{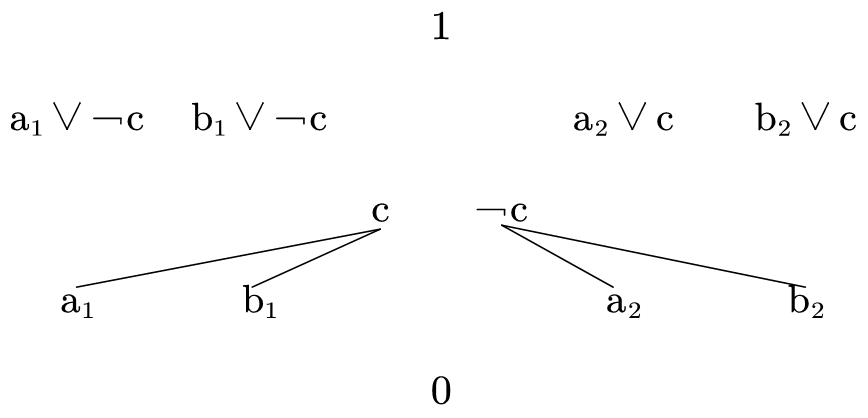}
          \caption{$B_2$}\label{pB2}
      \end{minipage}
\end{figure}

$B_1$ and $B_2$ have the same elements. $B_2$ contains two maximal Boolean subalgebras $C_1$ and $C_2$, which are generated by $\{a_1,b_1,\neg c\}$ and $\{a_2,b_2,c\}$. We have $a_1\leq c$ and $c\leq a_2\lor c$, but no $a_1\leq a_2\lor c$ because $a_1$ and $a_2\lor c$ are not in a same Boolean subalgebra. In other words, $B_2$ does not satisfy transitivity, while $B_1$ does not present such issue.\par

To exclude the $pBA$ like $B_2$. Abramsky and Barbosa\citep{Abramsky2020The} generalize the exclusivity principle to $pBA$.

\begin{definition}
 $B\in pBA$ is said to satisfy \textbf{logical exclusivity principle (LEP)} or to be \textbf{exclusive} if $\bot\subseteq\odot$.\par
 Use $epBA$ to denote the collection of all exclusive partial Boolean algebras.
\end{definition}

In other words, $B$ is exclusive if $a\odot b$ for all $a\bot b$ in $B$. $epBA$ is an extremely important structure because it satisfies the following three propositions.

\begin{proposition}[Abramsky and Barbosa\citep{Abramsky2020The}]
If $B\in pBA$, then $B$ is exclusive iff $B$ is transitive.\label{prop1}
\end{proposition}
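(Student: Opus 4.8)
The plan is to prove the two implications separately, relying on three facts that follow immediately from the definitions: (i) within any single Boolean subalgebra $\leq$ restricts to the usual Boolean order, so $a\leq c$ forces $\neg c\leq\neg a$; (ii) every Boolean subalgebra is closed under $\neg$, so $x\odot\neg y$ implies $x\odot y$; and (iii) the defining closure property of a $pBA$, namely that any pairwise-compatible subset is contained in a common Boolean subalgebra.

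For the direction \emph{transitive $\Rightarrow$ exclusive}, I would take an arbitrary exclusive pair $a\bot b$ and unpack it as $a\leq c$ and $b\leq\neg c$ for some $c\in B$. Applying fact (i) to $a\leq c$ gives $\neg c\leq\neg a$; chaining this with $b\leq\neg c$ through transitivity of $\leq$ yields $b\leq\neg a$, whence $b\odot\neg a$ and, by fact (ii), $a\odot b$. Since $a,b$ were an arbitrary exclusive pair, this shows $\bot\subseteq\odot$, i.e. $B$ is exclusive.

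For the converse \emph{exclusive $\Rightarrow$ transitive}, I would start from $a\leq b$ and $b\leq c$ and first establish the compatibility $a\odot c$. The trick is to exhibit $a\bot\neg c$ using $b$ as the separating element: one has $a\leq b$ by hypothesis, and $\neg c\leq\neg b$ by fact (i) applied to $b\leq c$, which is exactly what $a\bot\neg c$ requires. Logical exclusivity then gives $a\odot\neg c$, hence $a\odot c$ by fact (ii). Now $\{a,b,c\}$ is pairwise compatible, so by fact (iii) it lies inside a single Boolean subalgebra, where ordinary transitivity of the Boolean order delivers $a\leq c$.

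I expect the only genuine subtlety to be keeping straight the distinction between the partial order $\leq$ (defined through $\land$ inside a context) and the compatibility relation $\odot$: transitivity of $\leq$ across the whole algebra is only available once the three elements are known to share a common Boolean subalgebra, and it is precisely fact (iii) — pairwise compatibility implies joint compatibility — that promotes the within-context transitivity to the global statement. Correspondingly, the crux of the backward direction is deriving the compatibility $a\odot c$ from logical exclusivity rather than assuming it, which is what makes fact (iii) applicable.
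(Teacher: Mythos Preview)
The paper does not supply its own proof of this proposition: it is stated with attribution to Abramsky and Barbosa and followed immediately by commentary, with no proof environment. There is therefore nothing in the present paper to compare your argument against.

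That said, your proof is correct and is essentially the standard one. Both directions are handled cleanly: in the forward direction you correctly use transitivity to pass from $b\leq\neg c$ and $\neg c\leq\neg a$ to $b\leq\neg a$, and in the backward direction the key move---manufacturing the exclusivity $a\bot\neg c$ via the witness $b$ in order to extract $a\odot c$ from LEP, then invoking the pairwise-to-joint closure axiom of a $pBA$ to place $a,b,c$ in a single Boolean subalgebra---is exactly right. Your facts (i)--(iii) are all valid consequences of the definitions as given in the paper, and your closing remarks correctly identify where the subtlety lies.
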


The proposition \ref{prop1} indicates that LEP is equivalent to the transitivity. Therefore, LEP is necessary for the capability of normal logic reasoning of any contextual systems.

\begin{proposition}[Abramsky and Barbosa\citep{Abramsky2020The}]
If $B\in epBA$, then any $p\in s(B)$ satisfy the exclusivity principle.\label{prop2}
\end{proposition}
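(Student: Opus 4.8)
The plan is to reduce the exclusivity-principle inequality to an ordinary fact about finitely additive probability measures on a single Boolean algebra. Concretely, given a state $p\in s(B)$ and a set $E=\{e_1,\dots,e_n\}$ of pairwise exclusive elements, I would show (i) that $E$ is contained in one Boolean subalgebra $C\subseteq B$, (ii) that inside $C$ the relation $\bot$ coincides with Boolean disjointness, i.e. $e_i\land e_j=0$ for $i\neq j$, and (iii) that the restriction $p|_C$ is a finitely additive probability measure, so that $\sum_i p(e_i)=p(\bigvee_i e_i)\le p(1)=1$. The nontrivial content is concentrated in steps (i) and (ii), both of which are where exclusivity of $B$ (equivalently, by Proposition \ref{prop1}, transitivity) does the essential work.

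For step (i) I would invoke the logical exclusivity principle directly. Since the $e_i$ are pairwise exclusive, LEP ($\bot\subseteq\odot$) gives $e_i\odot e_j$ for all $i,j$, so $E$ is a pairwise compatible subset of $B$. The defining axiom of $pBA$ then supplies a Boolean subalgebra $C$ with $E\subseteq C$, inside which all finite meets and joins of the $e_i$ exist and obey the Boolean laws.

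Step (ii) is the crux, and is where I expect the main obstacle. From $e_i\bot e_j$ (Definition \ref{def-exclusive}) fix a witness $c\in B$ with $e_i\le c$ and $e_j\le\neg c$. Working in $C$ we have the Boolean inequalities $e_i\land e_j\le e_i$ and $e_i\land e_j\le e_j$. The delicate point is that $c$ need not lie in $C$, so I cannot immediately compose these inequalities with $e_i\le c$ and $e_j\le\neg c$; this is precisely the failure of transitivity illustrated by $B_2$. Here I would use Proposition \ref{prop1}: because $B$ is exclusive it is transitive, so $e_i\land e_j\le e_i\le c$ yields $e_i\land e_j\le c$, and likewise $e_i\land e_j\le e_j\le\neg c$ yields $e_i\land e_j\le\neg c$. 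Finally $e_i\land e_j\le c$ forces $e_i\land e_j\odot c$, so $\{e_i\land e_j,c\}$ lies in a common Boolean subalgebra, in which $e_i\land e_j=(e_i\land e_j)\land c=(e_i\land e_j)\land\neg c\land c=0$. Hence $e_i\land e_j=0$.

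For step (iii), since every pair in $C$ is compatible, the state axioms of Definition \ref{def_state} restricted to $C$ read $p(0)=0$, $p(\neg x)=1-p(x)$ and $p(x\lor y)+p(x\land y)=p(x)+p(y)$, i.e. $p|_C$ is a finitely additive probability valuation with $p(1)=p(\neg 0)=1$. Using $e_i\land e_j=0$ and a short induction on the modularity law, $p\bigl(\bigvee_{i=1}^n e_i\bigr)=\sum_{i=1}^n p(e_i)$; since $\bigvee_i e_i\le 1$ and $p|_C$ is monotone, the sum is at most $p(1)=1$, which is the exclusivity principle for $p$. The only genuinely $pBA$-specific difficulty is step (ii), where the externally witnessed relation $\bot$ must be converted into a Boolean identity, and it is exactly transitivity (LEP) that licenses this conversion.
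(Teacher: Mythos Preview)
Your argument is correct. The paper itself does not supply a proof of this proposition; it simply attributes the result to Abramsky and Barbosa, so there is no in-paper proof to compare against.

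One minor observation on step (ii): you take a small detour through Proposition~\ref{prop1} (transitivity) to push $e_i\land e_j$ below both $c$ and $\neg c$. A slightly more direct route is available using only LEP: from $e_i\le c$ and $e_j\le\neg c$ you already have $e_i\odot c$ and $e_j\odot c$ (compatibility is preserved under $\neg$), and LEP gives $e_i\odot e_j$; hence $\{e_i,e_j,c\}$ is pairwise compatible and sits in a single Boolean subalgebra, where $e_i\land e_j\le c\land\neg c=0$ is immediate. This avoids invoking transitivity as a separate lemma, though of course the two are equivalent by Proposition~\ref{prop1}, so the difference is purely cosmetic. Your step (iii) tacitly uses monotonicity of $p$, which the paper records separately as Lemma~\ref{p-monotonicity}; you may wish to cite it explicitly.
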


The proposition \ref{prop2} indicates that LEP is stronger than the exclusivity principle. Note that quantum mechanics is exclusive, since any two exclusive projectors must be compatible. Therefore, LEP is more suitable than the exclusivity principle to describe the quantum mechanics.\par

We need to define several notions for the third proposition.

\begin{definition}
 Let $B\in pBA$.
   \begin{itemize}
    \item  The \textbf{atom} of $B$ is the element $a\in B$ such that $a\neq 0$, and $x\leq a$ implies $x=0$ or $x=a$ for any $x\in B$. Use $A(B)$ to denote the atoms set of $B$.
    \item The \textbf{atom graph} of $B$, written $AG(B)$, is defined by a graph with vertices set $A(B)$ and $a_1,a_2\in A(B)$ are adjacent iff $a_1\odot a_2$ and $a_1\neq a_2$ .
 \end{itemize}
\end{definition}

\begin{definition}
If $G$ is a finite simple graph, a \textbf{state} on $G$ is defined by a map $p:V(G)\rightarrow[0,\ 1]$ such that for each maximal clique $C$ of $G$, $\sum\limits_{v\in C}p(v)=1$. Use $s(G)$ to denote the states set on $G$.
\end{definition}

\begin{proposition}[Liu et al.\citep{Liu2025Atom}]
 If $B,B_1$ and $B_2$ are any finite $epBA$. then
  \begin{itemize}
    \item $s(B)\cong s(AG(B))$.
    \item $B_1\cong B_2$ iff $AG(B_1)\cong AG(B_2)$.
 \end{itemize}\label{prop3}
\end{proposition}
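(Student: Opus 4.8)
The plan is to prove the two claims in turn, treating the state-space isomorphism as the computational core and the reconstruction of $B$ from $AG(B)$ as a structural consequence of it. Throughout I would lean on two facts available from the excerpt: Proposition \ref{prop1}, which lets me replace exclusivity by transitivity, and the defining axiom of $pBA$ that any pairwise-compatible subset of $B$ is contained in a common Boolean subalgebra. I read $\cong$ on state spaces as an affine (convex) isomorphism, so the goal of the first bullet is a pair of mutually inverse affine maps.

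First I would establish the bridging lemma that the maximal cliques of $AG(B)$ are exactly the atom sets of the maximal Boolean subalgebras (contexts) of $B$. A clique of atoms is pairwise compatible, so by the $pBA$ axiom it lies in some Boolean subalgebra $C$, and its members, being minimal nonzero in $B$, are also atoms of $C$; maximality of the clique then forces it to exhaust $A(C)$ and forces $C$ to be maximal. Conversely, the atoms of a maximal context form a maximal clique, but this requires the nontrivial fact that an atom of a maximal context $C$ is genuinely an atom of $B$ — precisely here LEP enters, since if $b<a'$ were a strictly smaller atom of $B$ with $a'\in A(C)$, then $b$ is exclusive to, hence by LEP compatible with, every other atom of $C$, so $\{b\}\cup A(C)$ is pairwise compatible and sits in a Boolean subalgebra, contradicting maximality of $C$ unless $b\in C$. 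With the bridging lemma I would define $\Phi:s(B)\to s(AG(B))$ by restriction, $\Phi(p)=p|_{A(B)}$; since the atoms of a maximal context partition $1$, finite additivity gives $\sum_{a\in C}p(a)=p(1)=1$ on each maximal clique $C$, so $\Phi(p)$ is a graph state, and $\Phi$ is visibly affine.

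The inverse map $\Psi(q)=\tilde q$, where $\tilde q(x)=\sum_{a\in A(B),\,a\le x}q(a)$, carries the heart of the argument, and showing $\tilde q$ is a well-defined state is the step I expect to be the main obstacle. The difficulty is that the atoms below $x$ must be read off consistently inside any context containing $x$: I would prove, again using transitivity, that for every maximal context $C\ni x$ the atoms of $B$ below $x$ coincide with the atoms of $C$ below $x$ and that $x=\bigvee\{a\in A(B):a\le x\}$. Granting this, $\tilde q(0)=0$ is immediate, $\tilde q(\neg x)=1-\tilde q(x)$ follows because inside $C$ the atoms split between $x$ and $\neg x$, and the modularity identity $\tilde q(x\lor y)+\tilde q(x\land y)=\tilde q(x)+\tilde q(y)$ reduces to inclusion–exclusion on atom sets within a common context. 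Verifying $\Psi\circ\Phi=\mathrm{id}_{s(B)}$ and $\Phi\circ\Psi=\mathrm{id}_{s(AG(B))}$ is then routine, yielding the affine isomorphism $s(B)\cong s(AG(B))$.

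For the second claim the forward direction is easy: an $epBA$-isomorphism $f:B_1\to B_2$ preserves $0,1,\neg,\land,\lor$ and $\odot$, hence sends atoms to atoms and adjacent pairs to adjacent pairs, inducing a graph isomorphism $AG(B_1)\cong AG(B_2)$. The substantive converse rests on the reconstruction principle implicit in the bridging lemma, namely that a finite $epBA$ is the gluing of the power-set Boolean algebras of its maximal cliques along shared atoms, with each element identified with its set of atoms below it; injectivity of $x\mapsto\{a\in A(B):a\le x\}$ follows from $x=\bigvee\{a\le x\}$. A graph isomorphism $g:AG(B_1)\to AG(B_2)$ carries maximal cliques to maximal cliques bijectively and preserves membership, so the induced map on atom sets respects the per-context set operations and their overlaps, and assembling these yields a bijection $B_1\to B_2$ preserving $\odot$ and all Boolean operations. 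The one point needing care is that compatibility $a\odot b$ in $B_i$ is detected by the graph — two elements are compatible iff their atom sets jointly lie in a single clique — which is once more exactly the content of transitivity, so this too reduces to the same LEP-driven facts established for the first claim.
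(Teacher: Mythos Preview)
The paper does not contain a proof of this proposition: it is quoted with attribution to the external reference \cite{Liu2025Atom} and no argument is supplied in the present text. There is therefore nothing in the paper to compare your proposal against.

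That said, your outline is a reasonable reconstruction of what such a proof must look like, and the places you flag as delicate are the right ones. The bridging lemma (maximal cliques of $AG(B)$ are exactly the atom sets of maximal contexts) is the structural core, and your use of LEP/transitivity to show that atoms of a maximal context are atoms of $B$ is correct: if $0\neq b<a'$ with $a'\in A(C)$, then for every other $a''\in A(C)$ one has $b\le a'$ and $a''\le\neg a'$, so $b\bot a''$, whence $b\odot a''$ by LEP, contradicting the maximality of $C$. Your definitions of $\Phi$ and $\Psi$ are the natural ones; the only point I would urge you to make fully explicit is that the partial join in a $pBA$ is context-independent, so that once $x=\bigvee\{a\in A(B):a\le x\}$ is established inside one maximal context, the same join computed inside any other context containing those atoms returns the same element of $B$. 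This is what makes the ``compatibility is detected by the graph'' step go through in the second bullet: without it, the element obtained by rejoining the atoms inside a different maximal clique is not obviously $x$ itself. Once that is nailed down, the reconstruction of $B$ from $AG(B)$ and the transport of a graph isomorphism to an $epBA$-isomorphism proceed exactly as you describe.
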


The proposition \ref{prop3} reveals that each finite $epBA$ is determined by its atom graph. Therefore, the finite quantum systems can be exactly depicted by the graph theory. In reality, this proposition establishes a rigorous mathematical foundation for the graph-theoretic approaches within the quantum contextuality field. If $B\in epBA$ is finite, the atom graph $AG(B)$ is the formalization of the exclusivity graph, and $AG(B)$ preserves all the information from $B$ without loss due to the proposition \ref{prop3}.\par

Based on the reasons outlined above, we deem $epBA$ to be a suitable structure for the general probability theory encompassing quantum mechanics within the contextuality research. We present the event-based contextuality theory with $epBA$ in the next subsection.

\subsection{Contextuality theory based on $epBA$}

In classical probability theory with the sample space $\Omega$, the events are defined by elements of the power set algebra $\mathcal{P}(\Omega)$. In quantum systems with Hilbert space $\mathcal{H}$, the events are the elements of the projectors algebra (see in Sect. \ref{sec-quantum-contextuality}). In general cases, we would have a given $epBA$, which contains all concerned events.

\begin{definition}
$B\in epBA$ are called \textbf{scenarios}, and the elements $e\in B$ are called \textbf{events}.
\end{definition}

In observable-based theory, $X=\{A_1,...,A_n\}$ generates different scenarios corresponding to different relations among the outcomes of $A_1,...,A_n$, including the compatibility, equivalence, exclusivity and logical relations. The observable-based scenarios cannot characterize all of these relations, which are actually embodied by the structure of $epBA$.

The hidden-variable theory is described by the Boolean algebra, which reflects the local aspects of the contextual systems. That is what we call contexts.

\begin{definition}
The (maximal) Boolean subalgebras of $B\in epBA$ are called the \textbf{(maximal) contexts} of $B$.
\end{definition}

In observable-based theory, the contexts are defined by the compatibility between observables, which inevitably loses some relations between the contexts.\par

A context of $B\in epBA$ exactly represents an observable of the system, and a maximal context represent a complete observable. If $B$ is finite, its maximal context $C$ is a finite Boolean algebra generated by atoms $A(C)$, and $a\in A(C)$  represent the values of a complete observable.

If $B\in epBA$, the states $p\in s(B)$ satisfy the local consistency and exclusivity principle. We define ``model" of our theory to be the state, while we will continue to use ``state" in the following text to avoid confusion.

\begin{definition}
If $B\in epBA$, $p\in s(B)$ are called the \textbf{models} on $B$.
\end{definition}

Our definition of the contextuality can be considered as a promotion of the sheaf theory approach (definition \ref{sheaf3}) and the exclusivity graph approach (definition \ref{exclusivity4}).  The 0-1 states $\delta\in s_{01}(B)$ actually assign values to all the complete observables of $B$, thus $\delta$ formalize the global events in the sheaf theory approach. Then the global probability distributions $d$ are the convex combination of global events, that is, global deterministic probability distributions.

\begin{definition}
 Let $B\in epBA$.  The state $p$ is \textbf{non-contextual} if $p\in s_{NC}(B):=conv(s_{01}(B))$, where $conv$ represents the convex combination.\label{def_contextuality}
\end{definition}

Now we prove that the definition \ref{def_contextuality} is consistent with the NCHV models.\par

\begin{lemma}
Let $B\in pBA$, and $p\in s(B)$. $p(e_1)\leq p(e_2)$ if $e_1\leq e_2$.\label{p-monotonicity}
\end{lemma}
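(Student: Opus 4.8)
The plan is to reduce the whole argument to computations inside a single Boolean subalgebra and then use only that a state is non-negative. By hypothesis $e_1\le e_2$ means $e_1\odot e_2$ together with $e_1\land e_2=e_1$, so the defining property of a $pBA$ supplies a Boolean subalgebra $C\subseteq B$ with $e_1,e_2\in C$. Inside $C$ every partial operation is total and the ordinary Boolean identities hold, so I may compute freely there; this is the crucial leverage that $e_1\odot e_2$ provides.

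First I would introduce the difference element $c:=e_2\land\neg e_1$, which lies in $C$ since $C$ is closed under $\neg$ and $\land$. Using distributivity in $C$, I would verify $e_1\lor c=(e_1\lor e_2)\land(e_1\lor\neg e_1)=(e_1\lor e_2)\land 1=e_1\lor e_2=e_2$, where the final equality uses the Boolean fact that $e_1\land e_2=e_1$ is equivalent to $e_1\lor e_2=e_2$. Dually, $e_1\land c=(e_1\land\neg e_1)\land e_2=0$. Thus $e_2$ splits as the join of the two disjoint pieces $e_1$ and $c$.

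Next I would apply the modularity axiom in the definition of a state to the compatible pair $e_1,c$, giving $p(e_1\lor c)+p(e_1\land c)=p(e_1)+p(c)$. Substituting $e_1\lor c=e_2$ and $e_1\land c=0$ and using $p(0)=0$, this collapses to $p(e_2)=p(e_1)+p(c)$.

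Finally, since a state takes values in $[0,1]$ we have $p(c)\ge 0$, whence $p(e_2)=p(e_1)+p(c)\ge p(e_1)$, which is the claim. The only point demanding care — and the closest thing to an obstacle — is ensuring that $e_1$, $e_2$, and the auxiliary element $c$ genuinely reside in one common Boolean subalgebra, so that the Boolean manipulations and the single invocation of modularity are all legitimate; this is precisely what the compatibility $e_1\odot e_2$ guarantees. No limiting or continuity argument is needed, since a state on a $pBA$ is only required to be finitely additive.
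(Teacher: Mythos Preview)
Your proof is correct and follows essentially the same route as the paper: both introduce the difference element $c=e_2\land\neg e_1$, verify $e_1\lor c=e_2$ and $e_1\land c=0$, apply the state's modularity axiom to the compatible pair $(e_1,c)$, and conclude from $p(c)\ge 0$. Your version is slightly more explicit in invoking the common Boolean subalgebra $C$ to legitimize the Boolean identities and the compatibility $e_1\odot c$, which the paper leaves implicit.
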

\begin{proof}
$e_1\leq e_2$ means that $e_1\land e_2=e_1$. Then $e_1\lor e_2=(e_1\land e_2)\lor e_2=(e_1\lor e_2)\land(e_2\lor e_2)=(e_1\lor e_2)\land e_2$, so $e_1\lor e_2\leq e_2$, thus $e_1\lor e_2=e_2$. Therefore, $p(e_2)=p(e_1\lor(e_2\land\neg e_1))=p(e_1)+p(e_2\land\neg e_1)-p(e_1\land e_2\land\neg e_1)=p(e_1)+p(e_2\land\neg e_1)\geq p(e_1)$.
\end{proof}

\begin{theorem}
Let $B\in epBA$. $p\in s_{NC}(B)$ iff there is a NCHV model which realizes $p$.\label{thm-contextuality}
\end{theorem}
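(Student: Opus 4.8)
The plan is to exhibit a correspondence between the extreme points of $s_{NC}(B)$, namely the $0$-$1$ states $s_{01}(B)$, and the deterministic assignments appearing in a NCHV model, and then match the convex weights on both sides. The bridge is the fact recorded earlier in the excerpt: a hidden variable $\lambda$ is exactly a global value-assignment to all observables, and (by the cited Abramsky--Brandenburger result) such an assignment is equivalent to a consistent choice of one outcome per maximal context. I would first fix the translation between the two languages: a maximal context $C$ (a maximal Boolean subalgebra of $B$) plays the role of a complete observable, its atoms $A(C)$ are the joint outcomes $O_C$, and the restriction $p|_C$ of a state to the atoms of $C$ is the distribution $p_C$. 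Because $B\in epBA$ captures local consistency, these restrictions automatically agree on shared events, so $\{p|_C\}$ is a genuine family satisfying non-disturbance.

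For the forward direction, suppose $p\in s_{NC}(B)$, so $p=\sum_i \mu_i \delta_i$ with $\delta_i\in s_{01}(B)$, $\mu_i\geq 0$ and $\sum_i\mu_i=1$. Each $\delta_i$ is a homomorphism $B\to\{0,1\}$, hence on every maximal context $C$ it sends exactly one atom to $1$ (the atoms of $C$ partition $1$ and their values sum to $p(1)=1$). I would take the hidden-variable space $\Lambda=\{\delta_i\}$ with $p_\Lambda(\delta_i)=\mu_i$ and set $\delta(e\mid\delta_i):=\delta_i(e)$. The marginal identity $p_C(e)=\sum_i p_\Lambda(\delta_i)\,\delta(e\mid\delta_i)$ is immediate, and factorizability $\delta(e_1\land e_2\mid\delta_i)=\delta_i(e_1\land e_2)=\delta_i(e_1)\delta_i(e_2)$ for $e_1\odot e_2$ holds precisely because $\delta_i$ is a homomorphism. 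This produces a NCHV model realizing $p$.

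For the converse, suppose a NCHV model realizes $p$, with data $(\Lambda, p_\Lambda, \delta(\cdot\mid\lambda))$. The key step is to promote each deterministic assignment $\delta(\cdot\mid\lambda)$, a priori defined on the outcomes $O_C$ context by context, to a single $0$-$1$ state $\delta_\lambda\in s_{01}(B)$ on all of $B$. For $x\in B$ I would set $\delta_\lambda(x)=\sum_{a\in A(C),\,a\leq x}\delta(a\mid\lambda)$ using any maximal context $C\ni x$, and then verify that this value is independent of $C$ and that $\delta_\lambda$ respects $\neg,\land,\lor$; Lemma \ref{p-monotonicity} and the monotonicity of $\leq$ it encodes help to check that the resulting map is order-preserving and well-defined. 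Once each $\lambda$ yields such a $\delta_\lambda$, we obtain $p=\sum_\lambda p_\Lambda(\lambda)\,\delta_\lambda$, exhibiting $p$ as a convex combination of $0$-$1$ states, i.e. $p\in s_{NC}(B)$.

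The main obstacle is exactly this well-definedness in the converse: showing that a value-assignment consistent context-by-context actually glues into a homomorphism on $B$, since $B$ may encode exclusivity relations between events living in different contexts (the very phenomenon the paper highlights, e.g. $(\hat A=a_0)\bot(\hat B=b_1)$). This is where the hypothesis $B\in epBA$ is essential: by Proposition \ref{prop1}, logical exclusivity is equivalent to transitivity of $\leq$, and it is precisely transitivity that forbids the inconsistent gluings illustrated by the algebra $B_2$ and guarantees that a locally consistent assignment extends to a genuine $0$-$1$ state. I would therefore isolate, as the crucial lemma, the statement that for $B\in epBA$ the per-context deterministic assignments arising from a single hidden variable determine a unique homomorphism $B\to\{0,1\}$, leaning on transitivity to rule out conflicts.
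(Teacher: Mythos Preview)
Your forward direction is essentially the paper's: take $\Lambda=s_{01}(B)$, set $\delta(e\mid\lambda)=\lambda(e)$, and check factorizability from the homomorphism property. That part is fine.

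The converse is where you diverge, and the divergence stems from a misreading of the setup. In the event-based framework of this paper, an event \emph{is} an element of $B$; it is not a context-tagged outcome. Consequently, the NCHV data $\delta(\cdot\mid\lambda)$ is already a single global map $B\to\{0,1\}$, not a family of per-context assignments that must be glued. There is nothing to ``promote'': the issue you flag (different contexts potentially disagreeing on a shared event) is precisely the observable-based pathology the paper has eliminated by passing to $epBA$. So your proposed ``crucial lemma'' about gluing via transitivity is attacking a non-problem, and the invocation of Proposition~\ref{prop1} is a red herring here---the paper's sufficiency argument never uses exclusivity or transitivity.

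What the paper actually does in the converse is verify directly that $\delta(\cdot\mid\lambda)\in s_{01}(B)$ whenever $p_\Lambda(\lambda)>0$, by checking the three state axioms: (i) $\delta(0\mid\lambda)=0$ follows from $p(0)=0$ and positivity of the weights; (ii) $\delta(\neg e\mid\lambda)=1-\delta(e\mid\lambda)$ is obtained by a short contradiction (if both were $0$ for some $\lambda'$ of positive weight, then $p(e)+p(\neg e)<1$), together with factorizability applied to $e\land\neg e=0$; (iii) additivity $\delta(e_1\lor e_2\mid\lambda)+\delta(e_1\land e_2\mid\lambda)=\delta(e_1\mid\lambda)+\delta(e_2\mid\lambda)$ is then a one-line De~Morgan computation using factorizability and (ii). Your proposal never reaches these verifications, and your atom-based formula $\delta_\lambda(x)=\sum_{a\in A(C),\,a\leq x}\delta(a\mid\lambda)$ additionally presupposes that $B$ is atomic (effectively finite), which the theorem does not assume.
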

\begin{proof}
Necessity. If $p\in s_{NC}(B)$, let $\Lambda=s_{01}(B)$. $\Lambda$ is not empty because $s_{NC}(B)$ is not empty. Then $p=\sum\limits_{\lambda\in\Lambda}p_{\Lambda}(\lambda)\lambda$. We have $p_{\Lambda}(\lambda)\geq 0$ and $\sum\limits_{\lambda\in\Lambda}p_{\Lambda}(\lambda)=1$. Thus $p_{\Lambda}$ is a probability distribution on $\Lambda$. For any $e\in B$ and $\lambda\in\Lambda$, we define a deterministic conditional probability $\delta(e|\lambda)=\lambda(e)$.\par

We verify that $\delta(e|\lambda)=\lambda(e)$ satisfies the factorizability property. Since $\lambda$ is a 0-1 state on $B$, for $e_1\odot e_2$, if $\lambda(e_1)=\lambda(e_2)=0$ then $\lambda(e_1\lor e_2)+\lambda(e_1\land e_2)=0$, so $\lambda(e_1\land e_2)=0$; if $\lambda(e_1)=\lambda(e_2)=1$ then $\lambda(e_1\lor e_2)+\lambda(e_1\land e_2)=2$, so $\lambda(e_1\land e_2)=1$; if $\lambda(e_1)=1$ and $\lambda(e_2)=0$ then $\lambda(e_1\lor e_2)+\lambda(e_1\land e_2)=1$, so $\lambda(e_1\land e_2)=0$ since $p(e_1\land e_2)\leq p(e_1\lor e_2)$ (the lemma \ref{p-monotonicity}); similarly to $\lambda(e_1)=0$ and $\lambda(e_2)=1$. In summary, $\lambda(e_1\land e_2)=\lambda(e_1)\lambda(e_2)$.\par

Therefore, for any contexts $C\subset B$ and $e\in C$, $p_C(e)=p|_C(e)=p(e)=\sum\limits_{\lambda\in\Lambda}p_{\Lambda}(\lambda)\delta(e|\lambda)$. $p$ can be realized by a NCHV model.\par

Sufficiency. If there exists $\Lambda$, $p_{\Lambda}$ and $\delta(\cdot|\lambda)$ such that $p(e)=\sum\limits_{\lambda\in\Lambda}p_{\Lambda}(\lambda)\delta(e|\lambda)$ is a NCHV model. $\delta(e|\lambda)$ satisfies the factorizability property. We prove that $\delta(\cdot|\lambda)\in s_{01}(B)$ for $p_{\Lambda}(\lambda)>0$.\par
$p(0)=\sum\limits_{\lambda\in\Lambda,p_{\Lambda}(\lambda)>0}p_{\Lambda}(\lambda)\delta(0|\lambda)=0$, thus $\delta(0|\lambda)=0$.\par
$\delta(e\land\neg e|\lambda)=\delta(0|\lambda)=0=\delta(e|\lambda)\delta(\neg e|\lambda)$. Suppose there exist $e\in B$ and $\lambda'\in\Lambda$ $(p_{\Lambda}(\lambda')>0)$ such that $\delta(e|\lambda')=\delta(\neg e|\lambda')=0$, then $p(e)+p(\neg e)=\sum\limits_{\lambda\in\Lambda,\lambda\neq\lambda'}p_{\Lambda}(\lambda)(\delta(e|\lambda)+\delta(\neg e|\lambda))<1$, which induces a contradiction. Therefore, $\delta(e|\lambda)=0$, $\delta(\neg e|\lambda)=1$ or $\delta(e|\lambda)=1$, $\delta(\neg e|\lambda)=0$ for all $e\in B$ and $\lambda\in\Lambda$ $(p_{\Lambda}(\lambda)>0)$.\par
Finally, $\delta(e_1\lor e_2|\lambda)+\delta(e_1\land e_2|\lambda)=\delta(\neg(\neg e_1\land\neg e_2)|\lambda)+\delta(e_1|\lambda)\delta(e_2|\lambda)=1-\delta(\neg e_1|\lambda)\delta(\neg e_2|\lambda)+\delta(e_1|\lambda)\delta(e_2|\lambda)=1-((1-\delta(e_1|\lambda))(1-\delta(e_2|\lambda)))+\delta(e_1|\lambda)\delta(e_2|\lambda)
=\delta(e_1|\lambda)+\delta(e_2|\lambda)$.\par

In conclusion, $\delta(\cdot|\lambda)\in s_{01}(B)$ for $p_{\Lambda}(\lambda)>0$. Therefore, $p$ is a convex combination of 0-1 states. $p\in s_{NC}(B)$.
\end{proof}

The theorem \ref{thm-contextuality} formalizes the relevant conclusions in exclusivity graph approach and the sheaf theory approach. \par

We have established an event-based contextuality theory. It can handle scenarios generated by general events, and resolve the contradictions arising from the observable-based definitions. In the next section, we formalize several important quantum contextuality.

\section{Quantum contextuality}\label{sec-quantum-contextuality}

Contextuality theories are proposed to abstract the features of quantum mechanics, that is, quantum contextuality. In this section, we use our theory to formalize some important but confusing notions. Specifically speaking, we will formalize ``Bell nonlocality", ``KCBS contextuality", ``logical contextuality", ``strong contextuality", ``maximal contextuality", ``full contextuality", ``KS contextuality" and ``state-independent contextuality" and investigate their relationship.\par

All these types of contextuality can be defined on $epBA$, while some of them are valuable only within the quantum mechanics. For example, the state-independent contextuality, abbreviated SIC, specifically indicates that all quantum states present contextuality. For simplicity, we will focus on the quantum systems in this section.

\subsection{Quantum system}

Quantum mechanics points out that the world is essentially random, thus all the physical phenomena can be depicted by "event" and "the probability of event". An event refers to the outcome of a single measurement, or equivalently, the value of an observable. If the considered Hilbert space is $\mathcal{H}$, then an observable is depicted by a bounded self-adjoint operator $\hat{A}$ onto $\mathcal{H}$. An event is a proposition like $(\hat{A}\in\Delta)$ ($\Delta$ is a Borel set of $\mathbb{R}$), which is depicted by a projector $\hat{P}$ onto the corresponding eigenspace of $\hat{A}$. Therefore, the event-algebra of quantum mechanics is the projector algebra.\par

In 1936, Birkhoff and Von Neumann introduced the standard quantum logic based on this perspective\citep{Birkhoff1936The}. Let $P(\mathcal{H})$ denote the set of projectors onto $\mathcal{H}$. If $\hat{P}_1,\ \hat{P}_2$ are projectors onto $S_1,\ S_2$, $\hat{P}_1\land\hat{P}_2$ is defined to be the projector onto $S_1\cap S_2$, and $\neg\hat{P}_1$ is defined to be the projector onto $S_1^{\bot}$. Then $\hat{P}_1\lor\hat{P}_2=\neg(\neg\hat{P}_1\land\lnot\hat{P}_2)$. If these operations are totally defined, then $P(\mathcal{H})$ is an orthocomplemented modular lattice, which presents several disadvantages such as not satisfying the distributive law \citep{Doering2010Topos}. Therefore, Kochen and Specker adopt $P(\mathcal{H})$ as a partial Boolean algebra in 1960s\citep{Kochen1967The}. Define $\hat{P}_1\odot\hat{P}_2$ if $\hat{P}_1\hat{P}_2=\hat{P}_2\hat{P}_1$, that is, the compatibility relation. We have $P(\mathcal{H})=(P(\mathcal{H}),\odot;\land,\lor,\neg,\hat{0},\hat{1})$ is a partial Boolean algebra, where $\hat{0}$ is the zero projector, and $\hat{1}$ is the projector onto $\mathcal{H}$. A context ,or a Boolean subalgebra of $P(\mathcal{H})$, consists of mutually compatible projectors. Any quantum system on $\mathcal{H}$ can be treated as partial Boolean subalgebra of $P(\mathcal{H})$.\par

\begin{definition}[\cite{Liu2025Atom}]
A \textbf{quantum system} is defined by a partial Boolean subalgebra of $P(\mathcal{H})$ for some Hilbert space $\mathcal{H}$. Use $QS$ to denote the collection of all quantum systems.
\end{definition}

If $\hat{P}_1\bot\hat{P}_2$, in other words, there is a projector $\hat{P}$ such that $\hat{P}_1\leq\hat{P}$ and $\hat{P}_2\leq\neg\hat{P}$, then $\hat{P}_1,\hat{P}_2$ must be orthogonal, so $\hat{P}_1\odot\hat{P}_2$. Therefore, quantum systems are exclusive.  We have

$$QS\subseteq epBA\subseteq pBA.$$

The quantum states $\rho$ on $Q\in QS$ are the density operators on $\mathcal{H}$. We simply treat $\rho$ as a state on $Q$, that is, $\rho(\hat{P})=tr(\rho\hat{P})$. Use $qs(Q)$ to denote the set of quantum states on $Q$. It is easy to verify that $qs(Q)\subseteq s(Q)$.

\subsection{Bell nonlocality}

Bell nonlocality\citep{Bell1964On} is the earliest contextuality, which has long held a dominant position in the field of quantum applications. In fact, Bell nonlocality is a term that specifically refers to the contextuality of the Bell systems.\par

A Bell system consists of $n$ spacelike-separated parts (Hilbert spaces like qubits), $k$ incompatible observables on each part and $l$ eigenvalues for each observable, which is called a $(n,k,l)$ system. The simplest Bell system is the Clauser-Horne-Shimony-Holt (CHSH) $(2,2,2)$ system\citep{Clauser1969Proposed}, which consists of two qubits with the Pauli operators $Z_1$, $X_1$ on the first qubit, and $S_2=-\frac{1}{\sqrt{2}}(Z_2+X_2)$, $T_2=\frac{1}{\sqrt{2}}(Z_2-X_2)$ on the second one. There are totally 16 atom event, $(Z_1=\pm 1,S_2=\pm 1)$, $(Z_1=\pm 1,T_2=\pm 1)$, $(X_1=\pm 1,S_2=\pm 1)$ and $(X_1=\pm 1,T_2=\pm 1)$. Each of them corresponds a 4-dimensional rank-1 projector. These projectors generate the CHSH system $Q_{CHSH}\in QS$. Note that the Bell systems are finite, so we can use the atom graph $AG(Q_{CHSH})$ to describe the system due to the proposition \ref{prop3}. $AG(Q_{CHSH})$ is shown in Fig. \ref{CHSH} which is equal to the exclusivity graph in \cite{Adan2014Graph}. The straight line or a circumference in Fig. \ref{CHSH} represent a maximal clique, that is, a maximal context. The $\hat{P}_{++|ZS}$, and similar, represents the projector corresponding the event $(Z_1=+1,S_2=+1)$.

\begin{figure}[H]
    \centering
    \includegraphics[width=0.5\linewidth]{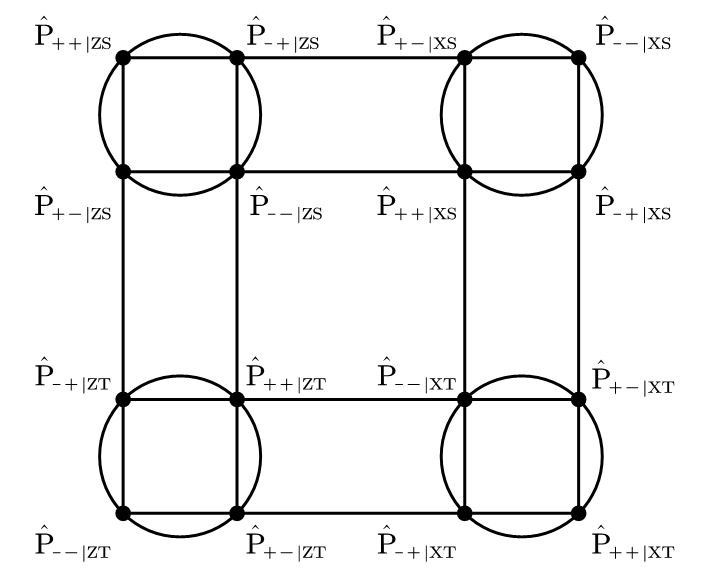}
    \caption{The atom graph of CHSH system $AG(Q_{CHSH})$.}
    \label{CHSH}
\end{figure}

Generally, the $(n,k,l)$ system $Q_{(n,k,l)}\in QS$ has $(kl)^n$ atoms, so $AG(Q_{n,k,l})$ has $(kl)^n$ vertices.\par

\begin{definition}
For Bell system $Q_{(n,k,l)}\in QS$ and $p\in s(Q_{(n,k,l)})$, $p$ is said to be \textbf{Bell nonlocal} if $p$ is contextual, that is, $p\notin s_{NC}(Q_{(n,k,l)})$.
\end{definition}

A state $p$ is verified to be Bell nonlocal usually by the Bell inequalities. The proposition below shows one version of the CHSH inequalities\citep{Clauser1969Proposed}. \par

\begin{proposition}
If $p\in s_{NC}(Q_{CHSH})$, then
\begin{equation}
\begin{split}
S_{CHSH}(p)=&p(\hat{P}_{++|ZS})+p(\hat{P}_{--|ZS})+p(\hat{P}_{++|XT})+p(\hat{P}_{--|XT})\\
+&p(\hat{P}_{++|XS})+p(\hat{P}_{--|XS})+p(\hat{P}_{+-|ZT})+p(\hat{P}_{-+|ZT})\leq3
\nonumber
\end{split}
\end{equation}
\end{proposition}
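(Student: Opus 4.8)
The plan is to use that $S_{CHSH}$ is an affine functional of the state together with the definition \ref{def_contextuality}, which gives $s_{NC}(Q_{CHSH}) = conv(s_{01}(Q_{CHSH}))$. The map $p \mapsto S_{CHSH}(p)$ is a finite nonnegative linear combination of the evaluations $p \mapsto p(\hat P)$, hence affine, so its maximum over the convex hull $conv(s_{01}(Q_{CHSH}))$ is attained at an extreme point, i.e.\ at some $0$--$1$ state. It therefore suffices to prove $S_{CHSH}(\delta) \le 3$ for every $\delta \in s_{01}(Q_{CHSH})$; writing an arbitrary $p \in s_{NC}(Q_{CHSH})$ as $p = \sum_{\lambda} p_{\Lambda}(\lambda)\lambda$ with $\sum_{\lambda} p_{\Lambda}(\lambda) = 1$ then yields $S_{CHSH}(p) = \sum_{\lambda} p_{\Lambda}(\lambda) S_{CHSH}(\lambda) \le 3$ by linearity.

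First I would describe the $0$--$1$ states concretely. A $0$--$1$ state $\delta$ is a homomorphism $Q_{CHSH} \to \{0,1\}$, and by the factorizability established in the proof of Theorem \ref{thm-contextuality} it satisfies $\delta(x \land y) = \delta(x)\delta(y)$ whenever $x \odot y$. Since $(Z_1 = +1)$ and $(Z_1 = -1)$ are complementary, exactly one of them is sent to $1$, and likewise for $X_1$, $S_2$, $T_2$. Thus $\delta$ is completely determined by four values $a, a', b, b' \in \{+1,-1\}$ assigned to $Z_1, X_1, S_2, T_2$ respectively, and on each rank-$1$ atom the value factorizes over the two tensor factors, e.g.\ $\delta(\hat P_{++|ZS}) = \delta(Z_1 = +1)\,\delta(S_2 = +1)$.

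Then I would evaluate the eight terms and collapse each context pair into a single two-valued correlation. For the $ZS$ context, $\delta(\hat P_{++|ZS}) + \delta(\hat P_{--|ZS}) = \tfrac12(1 + ab)$; similarly the $XT$ and $XS$ pairs give $\tfrac12(1 + a'b')$ and $\tfrac12(1 + a'b)$, while the anti-correlated $ZT$ pair gives $\delta(\hat P_{+-|ZT}) + \delta(\hat P_{-+|ZT}) = \tfrac12(1 - ab')$. Summing yields $S_{CHSH}(\delta) = 2 + \tfrac12\bigl(ab + a'b' + a'b - ab'\bigr) = 2 + \tfrac12\bigl(a(b - b') + a'(b + b')\bigr)$. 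The decisive observation is that, since $b, b' \in \{+1,-1\}$, exactly one of $b - b'$ and $b + b'$ equals $0$ and the other equals $\pm 2$, so the bracket lies in $\{-2, 2\}$ and $S_{CHSH}(\delta) \le 2 + 1 = 3$.

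I do not expect a deep obstacle here; the only point requiring care is the reduction in the second step, namely justifying that every $0$--$1$ state is a product assignment on the two qubits. This is exactly what the factorizability $\delta(x\land y)=\delta(x)\delta(y)$ from Theorem \ref{thm-contextuality} provides, guaranteeing that the value of $\delta$ on each atom $\hat P_{\pm\pm|\cdot\cdot}$ is the product of the two single-qubit values. Once this product structure is in hand, the remaining work is the routine correlation bookkeeping and the two-line sign bound above, and the convexity argument of the first paragraph closes the proof.
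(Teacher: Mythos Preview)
Your proof is correct and follows the standard CHSH derivation. The paper itself does not supply a proof of this proposition; it simply states the inequality with a citation to \cite{Clauser1969Proposed} and moves on to the Bell state computation. So there is no ``paper's proof'' to compare against beyond the implicit reference to the classical argument, which is precisely the argument you give: reduce by convexity (definition~\ref{def_contextuality}) to $0$--$1$ states, use the factorizability $\delta(e_1\land e_2)=\delta(e_1)\delta(e_2)$ from the proof of Theorem~\ref{thm-contextuality} to parametrize each $\delta$ by four signs $a,a',b,b'$, and then bound $a(b-b')+a'(b+b')$ by $\pm 2$.

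One small point worth making explicit in a write-up: your second step tacitly assumes that the rank-$2$ single-qubit events $(Z_1=\pm1)$, $(X_1=\pm1)$, $(S_2=\pm1)$, $(T_2=\pm1)$ belong to $Q_{CHSH}$ and are pairwise compatible across the tensor factor. This holds because, e.g., $(Z_1=+1)=\hat P_{++|ZS}\lor\hat P_{+-|ZS}$ lies in the Boolean subalgebra of the $ZS$ context, and $(Z_1=+1)\odot(S_2=+1)$ there as well, with meet $\hat P_{++|ZS}$. Once that is said, the factorization $\delta(\hat P_{\pm\pm|\cdot\cdot})=\delta(\cdot)\delta(\cdot)$ is fully justified and the rest is the routine sign bookkeeping you wrote.
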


Consider the entangled state $|\psi\rangle=\frac{1}{\sqrt{2}}(|01\rangle-|10\rangle)$ and $\rho=|\psi\rangle\langle\psi|\in qs(Q_{CHSH})$. Calculations yield that $S_{CHSH}(\rho)=2+\sqrt{2}>3$. Thus $\rho\notin s_{NC}(Q_{CHSH})$. $\rho$ is called a Bell state.

\subsection{KCBS contextuality}

The significant KCBS experiment\citep{Alexander2008Simple} has been introduced in the subsection \ref{subsec-ex-graph}. The observables $\{\hat{A}_0,...,\hat{A}_4\}$ induces exactly ten 3-dimensional rank-1 projectors $\hat{P}_i$ and $\hat{P}_{i\ i+1}$ corresponding to $(\hat{A}_i=-1)$ and $(\hat{A}_i=1, \hat{A}_{i+1}=1)$ $(i=0,1,2,3,4)$. Note that $(\hat{A}_i=-1)$, $(\hat{A}_{i-1}=1,\hat{A}_i=-1)$ and $(\hat{A}_i=-1,\hat{A}_{i+1}=1)$ are equivalent, and $(\hat{A}_i=-1,\hat{A}_{i+1}=-1)$ are impossible. In another way, the KCBS system $Q_{KCBS}$ is generated by projectors $\{\hat{P}_i\}_{i=0}^4$ and $\hat{P}_{i\ i+1}=\neg(\hat{P}_i\lor\hat{P}_{i+1})$ (with the sum modulo 5). The atom graph of $Q_{KCBS}$ is shown in Fig. \ref{KCBS}.

\begin{figure}[H]
    \centering
    \includegraphics[width=0.5\linewidth]{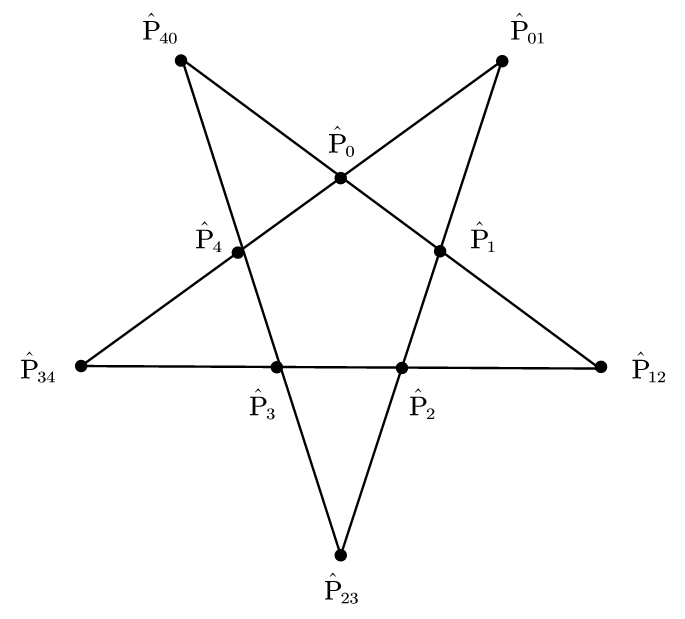}
    \caption{The atom graph of KCBS system $AG(Q_{KCBS})$}
    \label{KCBS}
\end{figure}

Compared with the exclusivity graph Fig. \ref{2014}, which has no realization in 3-dimensional space. The Fig. \ref{KCBS} correctly depicts the scenario of KCBS experiment.\par

To prove that $Q_{KCBS}$ allows contextuality, one can use the following KCBS inequality\citep{Alexander2008Simple,Adan2010Contextuality}.

\begin{proposition}
If $p\in s_{NC}(Q_{KCBS})$, then
$$S_{KCBS}(p)=p(\hat{P}_0)+p(\hat{P}_1)+p(\hat{P}_2)+p(\hat{P}_3)+p(\hat{P}_4)\leq2$$
\end{proposition}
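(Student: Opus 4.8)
The plan is to exploit the linearity of $S_{KCBS}$ together with the characterization of non-contextual states from Definition \ref{def_contextuality}. Since $p\in s_{NC}(Q_{KCBS})=conv(s_{01}(Q_{KCBS}))$, I would write $p=\sum_{\lambda}p_{\Lambda}(\lambda)\lambda$ as a convex combination of $0$-$1$ states, where $\sum_{\lambda}p_{\Lambda}(\lambda)=1$ and $p_{\Lambda}(\lambda)\geq 0$. Because $S_{KCBS}$ is an affine functional of $p$, we have $S_{KCBS}(p)=\sum_{\lambda}p_{\Lambda}(\lambda)\sum_{i=0}^{4}\lambda(\hat{P}_i)$, so it suffices to establish the per-state bound $\sum_{i=0}^{4}\lambda(\hat{P}_i)\leq 2$ for every $\lambda\in s_{01}(Q_{KCBS})$; the general inequality then follows immediately by averaging.

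The next step is to read off the combinatorial constraint that the structure of $Q_{KCBS}$ imposes on a $0$-$1$ state. As depicted in Fig. \ref{KCBS}, the maximal contexts of $Q_{KCBS}$ are the five Boolean subalgebras with atom sets $\{\hat{P}_i,\hat{P}_{i+1},\hat{P}_{i\ i+1}\}$ (indices modulo $5$), since $\hat{P}_{i\ i+1}=\neg(\hat{P}_i\lor\hat{P}_{i+1})$ and $\hat{P}_i,\hat{P}_{i+1}$ are orthogonal rank-$1$ projectors spanning a $2$-dimensional subspace. A $0$-$1$ state $\lambda$ is a homomorphism onto $\{0,1\}$, so on each such context it assigns the value $1$ to exactly one atom; in particular $\lambda(\hat{P}_i)+\lambda(\hat{P}_{i+1})\leq 1$ for every $i$. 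Equivalently, the index set $\{i:\lambda(\hat{P}_i)=1\}$ contains no two cyclically adjacent indices, i.e. it is an independent set of the $5$-cycle on the vertices $\hat{P}_0,\dots,\hat{P}_4$.

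The core of the argument is then the elementary graph fact that the $5$-cycle has independence number $2$: any three vertices among five arranged in a cycle must contain an adjacent pair, so no independent set has more than two elements. This yields $\sum_{i=0}^{4}\lambda(\hat{P}_i)\leq 2$, which is precisely the per-state bound required in the first step, completing the proof. I expect no genuine obstacle here; the only point needing care is the justification of $\lambda(\hat{P}_i)+\lambda(\hat{P}_{i+1})\leq 1$, which follows from the state axiom $\lambda(\hat{P}_i\lor\hat{P}_{i+1})+\lambda(\hat{P}_i\land\hat{P}_{i+1})=\lambda(\hat{P}_i)+\lambda(\hat{P}_{i+1})$ of Definition \ref{def_state} together with $\hat{P}_i\land\hat{P}_{i+1}=\hat{0}$, since $\lambda(\hat{P}_i)=\lambda(\hat{P}_{i+1})=1$ would force the impossible value $\lambda(\hat{P}_i\lor\hat{P}_{i+1})=2$.
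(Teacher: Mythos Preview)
Your argument is correct and is the standard proof of the KCBS inequality: reduce by convexity to $0$--$1$ states, then use the pentagon orthogonality relations to see that the set $\{i:\lambda(\hat{P}_i)=1\}$ is an independent set in the $5$-cycle, whose independence number is $2$. The justification of $\lambda(\hat{P}_i)+\lambda(\hat{P}_{i+1})\leq 1$ via Definition \ref{def_state} and $\hat{P}_i\land\hat{P}_{i+1}=\hat{0}$ is exactly right.

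Note, however, that the paper does not actually supply its own proof of this proposition: it is stated with citations to \cite{Alexander2008Simple,Adan2010Contextuality} and then used. So there is nothing in the paper to compare your approach against; you have simply filled in the omitted (and well-known) argument.
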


If $\hat{P}_i$ correspond to the unit-vectors $v_i$ $(i=0,1,2,3,4)$, let $v=v_0+v_1+v_2+v_3+v_4$. Then the pure state $\rho\in qs(Q_{KCBS})$ defined by $v/||v||$ violates the KCBS inequality. In fact, $S_{KCBS}(\rho)=\sqrt{5}>2$. $\rho$ is the KCBS state.

\subsection{Logical contextuality}

Both the Bell state and KCBS state witness the standard contextuality depicted by definition \ref{def_contextuality}. Next we introduce some stronger type of contextuality.

Abramsky and Brandenburger established a hierarchy of contextuality with sheaf theory approach\citep{Abramsky2011sheaf}, where logical contextuality (or possibilistic contextuality) was a part of it. Silva redefined this notion with exclusivity graph\citep{Silva2017Graph}. The definitions below is consistent with theirs.

\begin{definition}
Let $Q\in QS$. $p\in s(Q)$ is said to be \textbf{logically contextual} if there exists an event $e\in Q$ such that $p(e)>0$, and for any $\delta\in s_{01}(Q)$ satisfying $\delta(e)=1$, there is another event $e'\in Q$ such that $\delta(e')=1$ but $p(e')=0$.
\end{definition}

The logical contextuality is introduced to abstract the Hardy-type proof of KS theorem\citep{Hardy1993Nonlocality}, which is also an elegant method to witness the Bell nonlocality without using inequalities. We have the logical contextuality is strictly stronger than contextuality.

\begin{theorem}[\citep{Abramsky2011sheaf}]
If $p\in s(Q)$ is logically contextual, then it is contextual.\label{logical-contextual}
\end{theorem}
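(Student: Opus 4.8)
The plan is to prove the contrapositive: if $p$ is non-contextual then $p$ is not logically contextual. By definition \ref{def_contextuality}, non-contextuality means $p\in conv(s_{01}(Q))$, so we may write $p=\sum_{\lambda\in\Lambda}p_{\Lambda}(\lambda)\lambda$ with each $\lambda\in s_{01}(Q)$, each $p_{\Lambda}(\lambda)\geq 0$, and $\sum_{\lambda}p_{\Lambda}(\lambda)=1$. Let $\mathrm{supp}(p):=\{\lambda\in\Lambda\mid p_{\Lambda}(\lambda)>0\}$ denote the collection of $0$-$1$ states actually appearing in this decomposition; it is nonempty since the weights sum to $1$.

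First I would record the central observation, a kind of monotonicity of $p$ along the support: for any $\lambda\in\mathrm{supp}(p)$ and any event $e'\in Q$, if $\lambda(e')=1$ then $p(e')>0$. This is immediate from
\[
p(e')=\sum_{\mu\in\Lambda}p_{\Lambda}(\mu)\mu(e')\geq p_{\Lambda}(\lambda)\lambda(e')=p_{\Lambda}(\lambda)>0,
\]
using that each $\mu(e')\in\{0,1\}$ is non-negative. In words, no event forced true by a positively-weighted $0$-$1$ state of the decomposition can have probability zero under $p$.

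Next I would fix an event $e$ with $p(e)>0$ (events with $p(e)=0$ never satisfy the first clause of the definition, so only this case matters). Since $p(e)=\sum_{\lambda}p_{\Lambda}(\lambda)\lambda(e)>0$ and each summand is non-negative with $\lambda(e)\in\{0,1\}$, there must exist some $\lambda_{0}\in\mathrm{supp}(p)$ with $\lambda_{0}(e)=1$. I claim $\delta:=\lambda_{0}$ witnesses the failure of logical contextuality at $e$: indeed $\delta\in s_{01}(Q)$ with $\delta(e)=1$, yet by the central observation every event $e'$ with $\delta(e')=1$ satisfies $p(e')>0$. Thus no $e'$ can play the role demanded by the definition, so $e$ does not certify logical contextuality. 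As $e$ was arbitrary, $p$ fails to be logically contextual, which is the desired contrapositive.

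I expect the only delicate point to be quantifier bookkeeping rather than mathematics: one must negate the nested quantifiers of logical contextuality correctly and, crucially, draw the witness $\delta$ from $\mathrm{supp}(p)$ rather than from an arbitrary $0$-$1$ state with $\delta(e)=1$, since it is precisely the positivity of the weight $p_{\Lambda}(\lambda_{0})$ that forces $p(e')>0$. Beyond the convex decomposition supplied by definition \ref{def_contextuality}, no further $epBA$ structure of $Q$ is needed, so the argument is essentially combinatorial.
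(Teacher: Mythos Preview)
Your proof is correct and follows essentially the same route as the paper's: both argue by contrapositive, write $p$ as a convex combination $\sum w_\delta\delta$ of $0$-$1$ states, pick for each event $e$ with $p(e)>0$ a $\delta'$ from the support satisfying $\delta'(e)=1$, and observe that $\delta'(e')=1$ forces $p(e')\geq w_{\delta'}>0$. The paper's version is more terse, but the logical skeleton is identical.
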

\begin{proof}
Suppose that $p\in s_{NC}(Q)$, then $p=\sum\limits_{\delta\in s_{01}(Q)}w_{\delta}\delta$ where $w_{\delta}\geq0$ and $\sum\limits_{\delta\in s_{01}(Q)}w_{\delta}=1$. For any $e\in Q$ such that $p(e)>0$, there is at least one $\delta'\in s_{01}(Q)$ such that $w_{\delta'}>0$ and $\delta'(e)=1$. Then for any other event $e'$, $\delta'(e')=1$ implies $p(e')>0$ , which is contradictory to $p$ is logically contextual. Thus $p\notin s_{NC}(Q)$.
\end{proof}

The converse of theorem \ref{logical-contextual} is not true. Consider the KCBS system $Q_{KCBS}$ and the KCBS state $\rho$. $\rho$ is contextual, but it is not logically contextual. In fact, $\rho(\hat{P}_i)=\sqrt{5}/5$ and $\rho(\hat{P}_{i\ i+1})=1-2\sqrt{5}/5>0$ for $i=0,1,2,3,4$. Thus $\rho(e)>0$ for any $e\in Q_{KCBS}$. Obviously $\rho$ is not logically contextual.

A known example of logically contextuality is the Hardy state\citep{Hardy1993Nonlocality}. It is a quantum state $\rho_{Hardy}$ on the CHSH system, shown in the Fig. \ref{Hardy} where the vertice 0 represents $\rho_{Hardy}(e)=0$, and the vertices 1 represents $\rho_{Hardy}(e)>0$.\par

\begin{figure}[H]
\centering
\includegraphics[width=0.5\linewidth]{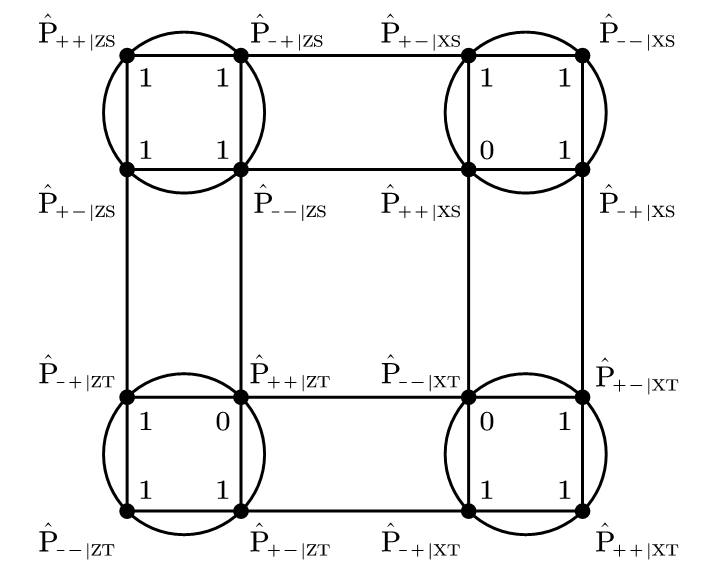}
\caption{The Hardy state $\rho_{Hardy}\in qs(Q_{CHSH})$}
\label{Hardy}
\end{figure}

For any $\delta\in s_{01}(Q_{CHSH})$ such that $\delta(\hat{P}_{++|ZS})=1$, to avoid contradiction, $\delta(\hat{P}_{-+|XS})$ and $\delta(\hat{P}_{+-|ZT})$ must be 1, which forces $\delta(\hat{P}_{--|XT})$ to be 1. However, $\rho_{Hardy}(\hat{P}_{--|XT})=0$. Thus the Hardy state $\rho_{Hardy}$ is logically contextual. In other words, if $Z_1,S_2=+1,+1$, then $X_1=-1$ and $T_2=-1$, which is contradictory to $\rho_{Hardy}$. \par

\subsection{Strong contextuality}

Strong contextuality was also presented by Abramsky and Brandenburger.\citep{Abramsky2011sheaf}. Both it and logical contextuality are introduced to describe the contextuality in terms of logic rather than probability. Silva also redefined the strong contextuality with exclusivity graph\citep{Silva2017Graph}. The definitions below is consistent with the definition \ref{sheaf4}, where the 0-1 states $\delta$ play the role of global events.

\begin{definition}
Let $Q\in QS$. $p\in s(Q)$ is said to be \textbf{strongly contextual} if for any $\delta\in s_{01}(Q)$, there is an event $e\in Q$ such that $\delta(e)=1$ but $p(e)=0$.
\end{definition}

The strong contextuality is introduced to abstract the GHZ-type proof of KS theorem\citep{Greenberger1989Going}, which is a stronger version of Hardy-type proof. We have the strong contextuality is strictly stronger than logical contextuality.

\begin{theorem}[\cite{Abramsky2011sheaf}]
If $p\in s(Q)$ is strongly contextual, then it is logically contextual.\label{strong-logical}
\end{theorem}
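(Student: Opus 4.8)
The plan is to exhibit a single, canonical witnessing event for logical contextuality, namely the top element $1$ of $Q$, and then read off the required ``impossible'' events directly from the hypothesis of strong contextuality. The intuition is that logical contextuality only demands \emph{one} possible event whose every consistent valuation is forced off the support of $p$, whereas strong contextuality asserts that \emph{every} valuation is forced off the support; since the top element is consistent with every valuation, the strongest such witness is simply $e = 1$.

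First I would record that $1$ is a legitimate witness on the probabilistic side. From the state axioms (Definition \ref{def_state}), $p(1) = p(\neg 0) = 1 - p(0) = 1 > 0$, so $e = 1$ satisfies the opening requirement $p(e) > 0$ in the definition of logical contextuality. Second, I would observe that $1$ is trivially consistent with every deterministic valuation: for any $\delta \in s_{01}(Q)$ we have $\delta(1) = \delta(\neg 0) = 1 - \delta(0) = 1$. Hence, for the choice $e = 1$, the quantifier ``for any $\delta\in s_{01}(Q)$ satisfying $\delta(e)=1$'' ranges over all of $s_{01}(Q)$.

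With these two observations the implication is immediate. Fix $e = 1$ and let $\delta \in s_{01}(Q)$ be arbitrary, so automatically $\delta(e) = 1$. Strong contextuality supplies an event $e' \in Q$ with $\delta(e') = 1$ and $p(e') = 0$, which is exactly the event demanded by the definition of logical contextuality; moreover $e' \neq e$ is automatic, since $p(e') = 0 \neq 1 = p(1)$. As $\delta$ was arbitrary, $p$ is logically contextual.

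I do not expect a substantive obstacle: the content of the theorem is essentially that the hypothesis of strong contextuality \emph{is} the instance of the defining condition of logical contextuality obtained by taking the tautological witness $e=1$. The only point meriting care is the degenerate case $s_{01}(Q) = \emptyset$ (for instance a Kochen-Specker system, where no $0$-$1$ state exists): there both defining conditions hold vacuously for every state, and the choice $e = 1$ still supplies the opening requirement $p(e) > 0$, so the uniform argument above needs no modification. I would also keep in mind that the theorem asserts only the forward implication; the strictness alluded to in the surrounding text is certified separately by an example such as the Hardy state, which is logically but not strongly contextual.
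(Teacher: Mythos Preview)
Your argument is correct and is precisely the kind of unpacking the paper has in mind: the paper's own proof consists of the single line ``Derive from the relevant definitions.'' Your choice of the witness $e=1$ makes explicit why the strong-contextuality condition is literally the $e=1$ instance of the logical-contextuality condition, and your handling of the vacuous case $s_{01}(Q)=\emptyset$ and of the ``another event'' clause is sound.
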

\begin{proof}
Derive from the relevant definitions.
\end{proof}

The converse of theorem \ref{strong-logical} is not true. In fact, the Hardy state $p_{Hardy}$ is not strongly contextual, because one can easily find a $\delta\in s_{01}(Q_{CHSH})$ (for example, let $\hat{P}_{--|ZS}$, $\hat{P}_{-+|ZT}$, $\hat{P}_{+-|XS}$ and $\hat{P}_{++|XT}$ be assigned 1) which is not contradictory with $\rho_{Hardy}$ for any $e\in Q_{CHSH}$.

The GHZ state\citep{Greenberger1989Going} $|\psi_{GHZ}\rangle=\frac{1}{\sqrt{2}}(|000\rangle+|111\rangle)$ is a famous example of strong contextuality. On a $(3,2,2)$ Bell system, $\rho_{GHZ}=|\psi_{GHZ}\rangle\langle\psi_{GHZ}|$ is shown in the Table \ref{GHZ}, where $X_i,Y_i$ $(i=1,2,3)$ are the Pauli operators.\par

\begin{table}[h]
\renewcommand{\arraystretch}{1.5}
\begin{tabular}{c|c|c|c|c|c|c|c|c|}
             & $+++$ & $++-$ & $+-+$ & $+--$ & $-++$ & $-+-$ & $--+$ & $---$ \\ \hline
 $X_1X_2X_3$ & $\frac{1}{4}$ & $0$ & $0$ & $\frac{1}{4}$ & $0$ & $\frac{1}{4}$ & $\frac{1}{4}$ & $0$ \\ \hline
 $X_1X_2Y_3$ & $\frac{1}{8}$ & $\frac{1}{8}$ & $\frac{1}{8}$ & $\frac{1}{8}$ & $\frac{1}{8}$ & $\frac{1}{8}$ & $\frac{1}{8}$ & $\frac{1}{8}$ \\ \hline
 $X_1Y_2X_3$ & $\frac{1}{8}$ & $\frac{1}{8}$ & $\frac{1}{8}$ & $\frac{1}{8}$ & $\frac{1}{8}$ & $\frac{1}{8}$ & $\frac{1}{8}$ & $\frac{1}{8}$ \\ \hline
 $X_1Y_2Y_3$ & $0$ & $\frac{1}{4}$ & $\frac{1}{4}$ & $0$ & $\frac{1}{4}$ & $0$ & $0$ & $\frac{1}{4}$ \\ \hline
 $Y_1X_2X_3$ & $\frac{1}{8}$ & $\frac{1}{8}$ & $\frac{1}{8}$ & $\frac{1}{8}$ & $\frac{1}{8}$ & $\frac{1}{8}$ & $\frac{1}{8}$ & $\frac{1}{8}$ \\ \hline
 $Y_1X_2Y_3$ & $0$ & $\frac{1}{4}$ & $\frac{1}{4}$ & $0$ & $\frac{1}{4}$ & $0$ & $0$ & $\frac{1}{4}$ \\ \hline
 $Y_1Y_2X_3$ & $0$ & $\frac{1}{4}$ & $\frac{1}{4}$ & $0$ & $\frac{1}{4}$ & $0$ & $0$ & $\frac{1}{4}$ \\ \hline
 $Y_1Y_2Y_3$ & $\frac{1}{8}$ & $\frac{1}{8}$ & $\frac{1}{8}$ & $\frac{1}{8}$ & $\frac{1}{8}$ & $\frac{1}{8}$ & $\frac{1}{8}$ & $\frac{1}{8}$ \\ \hline
 \end{tabular}
  \caption{The probability distributions induced by GHZ state $\rho_{GHZ}$}\label{GHZ}
\end{table}

$\rho_{GHZ}$ is strongly contextual. For any $\delta\in s_{01}(Q_{(3,2,2)})$, if $\delta(\hat{P}_{+++|X_1X_2X_3})=1$, then in order to avoid contradictory, $\delta$ must assign different values to $Y_1$, $Y_2$ and $Y_3$. In other words, $\delta(\hat{P}_{+|Y_1})=1$ iff $\delta(\hat{P}_{-|Y_2})=1$ iff $\delta(\hat{P}_{+|Y_3})=1$ iff $\delta(\hat{P}_{-|Y_1})=1$, which induces a contradictory. Similarly, one can verify that $\delta(\hat{P}_{+--|X_1X_2X_3})=1$, $\delta(\hat{P}_{-+-|X_1X_2X_3})=1$ or $\delta(\hat{P}_{--+|X_1X_2X_3})=1$ are all contradictory to $\rho_{GHZ}$.

\subsection{Maximal contextuality}

To quantify the strength of quantum contextuality, the contextual fraction is presented\citep{Abramsky2011sheaf,Amselem2012Experimental,Abramsky2017Contextual} which generalizes the notion of local fraction\citep{Elitzur1992Quantum,Aolita2012Fully} in the field of Bell nonlocality. A state is said to be maximally contextual if its contextual fraction equals to 1.

\begin{definition}
Let $Q\in QS$ and $p\in s(Q)$. Define $W_{NC}(p):=\max\{w\in[0,1]|p=wp_{NC}+(1-w)p_C\ s.t.\ p_{NC}\in s_{0,1}(Q)\ and\ p_C\in s(Q)\}$, called the \textbf{non-contextual fraction} of $p$. $W_C:=1-W_{NC}$ is called the \textbf{contextual fraction} of $p$. $p$ is said to be \textbf{maximally contextual} if $W_{C}(p)=1$.
\end{definition}

A major result gotten by the sheaf theory approach was that the maximal contextuality is equivalent to strong contextuality\citep{Abramsky2011sheaf}, but the original conclusion is slightly imprecise. We formalize their proof as follows.

\begin{theorem}\citep{Abramsky2011sheaf}
If $Q\in QS$ is finite, then $p\in s(Q)$ is maximally contextual if and only if it is strongly contextual.\label{strong-maximal}
\end{theorem}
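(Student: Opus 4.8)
The plan is to reduce the statement to a single sharp equivalence: for finite $Q$ and $p\in s(Q)$ one has $W_{NC}(p)>0$ if and only if $p$ is \emph{not} strongly contextual. Since maximal contextuality means $W_{C}(p)=1$, i.e. $W_{NC}(p)=0$, this equivalence read contrapositively is exactly the theorem. I would therefore establish the two implications of $\bigl(W_{NC}(p)>0\bigr)\Leftrightarrow\bigl(p\text{ not strongly contextual}\bigr)$ separately, noting that neither direction requires the maximum defining $W_{NC}$ to be attained—each direction only compares $W_{NC}(p)$ with $0$.

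First, the direction $W_{NC}(p)>0\Rightarrow p$ not strongly contextual. Suppose $W_{NC}(p)>0$, so there exist $w>0$, a $0$-$1$ state $\delta\in s_{01}(Q)$ and $p_C\in s(Q)$ with $p=w\delta+(1-w)p_C$. For every event $e$ with $\delta(e)=1$ we then get $p(e)=w\cdot 1+(1-w)p_C(e)\ge w>0$. Hence this particular $\delta$ satisfies ``$\delta(e)=1$ forces $p(e)>0$'', which is precisely the negation of strong contextuality. This direction is immediate and uses nothing beyond nonnegativity of $p_C$.

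The substantial direction is the converse: if $p$ is not strongly contextual I must manufacture a genuine decomposition with positive $0$-$1$ weight. By assumption there is $\delta\in s_{01}(Q)$ with $p(e)>0$ whenever $\delta(e)=1$. Here finiteness of $Q$ is essential: it guarantees $\varepsilon:=\min\{p(e):e\in Q,\ \delta(e)=1\}$ is a minimum over a nonempty finite set (the unit $1$ always qualifies, as $\delta(1)=1$ and $p(1)=1$), whence $\varepsilon>0$. I would then fix $w$ with $0<w\le\varepsilon$ and $w<1$ and define $p_C(e):=\dfrac{p(e)-w\delta(e)}{1-w}$. The core of the argument is checking $p_C\in s(Q)$. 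Normalization $p_C(0)=0$ and complementation $p_C(\neg e)=1-p_C(e)$ follow from the corresponding identities for $p$ and $\delta$ together with $\delta(\neg e)=1-\delta(e)$; the additivity law on each compatible pair $e_1\odot e_2$ is inherited because $p_C$ is an affine combination of the two states $p,\delta$ with coefficients $\tfrac{1}{1-w}$ and $\tfrac{-w}{1-w}$ summing to $1$, and every state axiom in Definition \ref{def_state} is itself affine. Finally nonnegativity holds by the choice of $w$: if $\delta(e)=0$ then $p_C(e)=p(e)/(1-w)\ge 0$, while if $\delta(e)=1$ then $p_C(e)=(p(e)-w)/(1-w)\ge(\varepsilon-w)/(1-w)\ge 0$; boundedness $p_C(e)\le 1$ then follows from $p_C(e)=1-p_C(\neg e)$. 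Thus $p=w\delta+(1-w)p_C$ witnesses $W_{NC}(p)\ge w>0$.

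Assembling the two implications yields $W_{NC}(p)=0\iff p$ strongly contextual, that is $W_{C}(p)=1\iff p$ strongly contextual, which is the claim. I expect the main obstacle to be exactly the verification that $p_C$ is an honest state rather than a signed functional—specifically the nonnegativity bound, which is where finiteness of $Q$ is indispensable: for an infinite scenario the infimum of $p$ over the support of $\delta$ could vanish even though each value is strictly positive, so no uniform $w>0$ would exist and the construction would fail. This is precisely the imprecision in the original sheaf-theoretic statement that the finiteness hypothesis repairs.
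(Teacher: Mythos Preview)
Your argument is correct and matches the paper's proof: both directions use the decomposition $p=w\delta+(1-w)p'$ with $p'=(p-w\delta)/(1-w)$, invoking finiteness precisely to ensure $\varepsilon=\min\{p(e):\delta(e)=1\}>0$. The only minor difference is that you verify $p'\in s(Q)$ by checking the affine state axioms of Definition~\ref{def_state} directly, whereas the paper restricts to atoms and appeals to Proposition~\ref{prop3}; your route is slightly more self-contained.
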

\begin{proof}
Without loss of generality, let $s_{01}(Q)$ be not empty.

If $p$ is strongly contextual, suppose that $p=wp_{NC}+(1-w)p_C$, $p_{NC}=\sum\limits_{\delta\in s_{01}(Q)}w_{\delta}\delta$, thus $p=w\sum\limits_{\delta\in s_{01}(Q)}w_{\delta}\delta+(1-w)p_C$. The strong contextuality means that for any $\delta\in s_{01}(Q)$, there exists an event $e\in Q$ such that $\delta(e)=1$ but $p(e)=0$, which deduces that all the coefficients $ww_{\delta}$ must be 0. Therefore $w=0$, so $W_{NC}(p)=0$. $p$ is maximally contextual.\par

Conversely, when $p$ is maximally contextual, if there exists a 0-1 state $\delta$ such that for any $\delta(e)=1$, $p(e)>0$, let $\epsilon=\min\{p(e)|\delta(e)=1,e\in Q\}$, Obviously $0<\epsilon<1$. Then $p$ can be decomposed into $p=\epsilon\delta+(1-\epsilon)p'$, where $p'=\displaystyle\frac{1}{1-\epsilon}(p-\epsilon\delta)$. $p(e)\geq 0$ and for any maximal cliques $C$ of the atom graph $AG(Q)$, $\sum\limits_{e\in C}p'(e)=\displaystyle\frac{1}{1-\epsilon}(\sum\limits_{e\in C}p(e)-\epsilon\sum\limits_{e\in C}\delta(e))=1$. Therefore, $p'|_{A(Q)}\in s(AG(Q))$, so $p'\in s(Q)$ due to the proposition \ref{prop3}. It is contradictory to $p$ is maximally contextual. In conclusion, $p$ is strongly contextual.
\end{proof}

The proof is only valid for finite quantum systems $Q$. If $Q$ is infinite, the $\epsilon=\inf\{p(e)|\delta(e)=1,e\in Q\}$ may be 0 in the proof of necessity. Therefore, for general $Q\in QS$, we only have that the strong contextuality is stronger than the maximal contextuality for now.

\begin{theorem}
If $p\in s(Q)$ is strongly contextual, then it is maximally contextual.
\end{theorem}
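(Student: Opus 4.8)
The plan is to show directly that strong contextuality forces the non-contextual fraction to vanish, i.e. $W_{NC}(p)=0$, which is exactly the content of maximal contextuality. Concretely, I would take an \emph{arbitrary} admissible decomposition $p=wp_{NC}+(1-w)p_C$ with $p_{NC}$ non-contextual and $p_C\in s(Q)$, and argue that the weight $w$ must be $0$; since the decomposition is arbitrary, the maximum over all such $w$ is $0$, giving $W_C(p)=1$. This is precisely the ``necessity'' half of the argument already carried out in Theorem \ref{strong-maximal}, and the whole point of isolating it here is that this half never appeals to finiteness of $Q$.

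First I would expand the non-contextual part as a finite convex combination of $0$-$1$ states, $p_{NC}=\sum_{i}w_i\delta_i$ with $w_i>0$ and $\sum_i w_i=1$, exactly as in the proof of Theorem \ref{strong-maximal}; such a finite decomposition exists because $s_{NC}(Q)=\mathrm{conv}(s_{01}(Q))$ and a convex hull consists of finite combinations. In particular at least one index, say $i_0$, has $w_{i_0}>0$. Next I would invoke strong contextuality applied to the single $0$-$1$ state $\delta_{i_0}$: there is an event $e\in Q$ with $\delta_{i_0}(e)=1$ but $p(e)=0$. The key computation is then $0=p(e)=w\,p_{NC}(e)+(1-w)\,p_C(e)$, where both summands are nonnegative because every state takes values in $[0,1]$; hence $w\,p_{NC}(e)=0$. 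Since $p_{NC}(e)=\sum_i w_i\delta_i(e)\geq w_{i_0}\delta_{i_0}(e)=w_{i_0}>0$, this forces $w=0$.

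Having shown every admissible decomposition has $w=0$, I conclude $W_{NC}(p)=0$, so $W_C(p)=1$ and $p$ is maximally contextual. I expect the only real subtlety --- and hence the ``main obstacle'' --- to be confirming that the argument genuinely sidesteps the finiteness assumption that Theorem \ref{strong-maximal} needed. The finite proof stumbled in the converse direction, where one extracted $\epsilon=\inf\{p(e)\mid\delta(e)=1\}$ and needed it to be strictly positive, i.e. attained as a minimum. Here no infimum over events is ever taken: strong contextuality supplies a \emph{single} witnessing event $e$ for the chosen $\delta_{i_0}$, and nonnegativity of the two summands alone annihilates the weight $w$. Thus the proof goes through verbatim for infinite $Q$, and the delicate $\inf=0$ failure mode simply cannot arise.
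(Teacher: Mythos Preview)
Your proposal is correct and follows essentially the same route as the paper: the paper's proof is literally ``Similar to the proof of sufficiency in Theorem~\ref{strong-maximal}'', and you have reproduced (and slightly streamlined) that sufficiency argument, correctly noting that it never invokes finiteness of $Q$. The only quibble is terminological: what you call the ``necessity'' half is what the paper labels the \emph{sufficiency} direction of Theorem~\ref{strong-maximal}; the content and logic are identical.
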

\begin{proof}
Similar to the proof of sufficiency in theorem \ref{strong-maximal}
\end{proof}

For the same reasons, it is also unknown whether the maximal contextuality and logical contextuality can be compared in general cases. If $p(e)>0$ for all $e\in Q$, $p$ is not logically contextual,  but it is not clear whether $p$ can be maximally contextual when $\inf\{p(e)|e\in Q\}=0$.

\subsection{Full contextuality}

The most renowned method to witness quantum contextuality experimentally is the noncontextuality (NC) inequalities, such as the Bell inequalities and KCBS inequality introduced before. The NC inequalities can also measure the strength of contextuality.\par

\begin{definition}
Let $Q\in QS$ and $S(p)=\sum\limits_{i=1}^nw_ip(e_i)$, where $p\in s(Q)$, $w_i\in\mathbb{R}$ and $e_i\in Q$ $(i=1,...,n)$. If $S(p)\leq M$ for all $p\in s_{NC}(Q)$, then $S(p)\leq M$ is called a \textbf{NC inequality} on $Q$. Define $M_{NC}(S):=\max\limits_{p\in s_{NC}(Q)}S(p)$, $M_{Q}(S):=\max\limits_{p\in qs(Q)}S(p)$ and $M_{C}(S):=\max\limits_{p\in s(Q)}S(p)$.
\end{definition}

The terminologies ``full contextuality" and ``maximal contextuality" are often used interchangeably\citep{Amselem2012Experimental,Aolita2012Fully}. Here we define full contextuality if a state attains the upper bound of a NC inequality.

\begin{definition}
Let $Q\in QS$. $p\in s(Q)$ is said to be \textbf{fully contextual} if $S(p)=M_{C}(S)>M_{NC}(S)$ for some formula $S(p)=\sum\limits_{i=1}^nw_ip(e_i)$.
\end{definition}

The condition $M_{NC}(S)<M_C(S)$ is indispensable, otherwise $p$ can be not even contextual.

In fact, the full contextuality aims to depict the boundary of $s(Q)$. \cite{Adan2014Graph} shows that if we consider $p\in s(Q)$ as vectors, then $s_{NC}(Q)$ and $s(Q)$ are polytopes, and $qs(Q)$ is a convex set. We have $s_{NC}(Q)\subseteq s(Q)$ and $qs(Q)\subseteq s(Q)$, but we do not have $s_{NC}(Q)\subseteq qs(Q)$ because it is possible that some 0-1 states $\delta$ cannot be realized by the quantum states on $Q$. Nevertheless, Cabello et al. proved that $s_{NC}(Q)=qs(Q)$ iff $qs(Q)=s(Q)$ iff $Q$ does not present contextuality\citep{Adan2014Graph}.

The relationship between full contextuality and maximal contextuality has been partly revealed below\citep{Barrett2006Maximally}.

\begin{theorem}[\citep{Barrett2006Maximally}]
If $p\in s(Q)$ is fully contextual, then it is maximally contextual.\label{full-maximal}
\end{theorem}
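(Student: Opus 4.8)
The plan is to prove this by contradiction, using only the linearity of the test functional $S$ together with the defining maximality of full contextuality; this is the convexity argument behind \citep{Barrett2006Maximally}. Recall that $p$ is maximally contextual precisely when $W_C(p)=1$, equivalently $W_{NC}(p)=0$. So I would assume that $p$ is fully contextual but \emph{not} maximally contextual, i.e. $W_{NC}(p)=w>0$, and derive a contradiction with the strict separation the hypothesis provides.

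First I would lay out the two inputs. Full contextuality supplies a functional $S(p)=\sum_{i=1}^{n}w_i\,p(e_i)$ with $S(p)=M_C(S)$ and $M_C(S)>M_{NC}(S)$. Since $W_{NC}$ is defined as a maximum, $W_{NC}(p)=w>0$ yields an explicit decomposition $p=w\,p_{NC}+(1-w)\,p_C$ with $p_{NC}$ a $0$-$1$ state and $p_C\in s(Q)$. The key observation is that $S$, being a fixed linear combination of the evaluation maps $p\mapsto p(e_i)$, is linear along such convex decompositions, so $S(p)=w\,S(p_{NC})+(1-w)\,S(p_C)$.

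Next I would bound the two terms. Every $0$-$1$ state lies in $s_{NC}(Q)=\mathrm{conv}(s_{01}(Q))$, so $p_{NC}$ is non-contextual and hence $S(p_{NC})\le M_{NC}(S)$; and $p_C\in s(Q)$ gives $S(p_C)\le M_C(S)$. Substituting these together with $S(p)=M_C(S)$ produces $M_C(S)\le w\,M_{NC}(S)+(1-w)\,M_C(S)$; cancelling $(1-w)M_C(S)$ and dividing by $w>0$ leaves $M_C(S)\le M_{NC}(S)$, contradicting full contextuality. Therefore $W_{NC}(p)=0$, i.e. $p$ is maximally contextual.

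The argument is essentially pure convexity, so I do not expect a genuine obstacle. The only points demanding care are that the maximum defining $W_{NC}(p)$ is actually attained by some decomposition with $w>0$ (immediate from the $\max$ in the definition) and that a $0$-$1$ state is legitimately counted as non-contextual, which is exactly what makes the bound $S(p_{NC})\le M_{NC}(S)$ valid.
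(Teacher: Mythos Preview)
Your argument is correct and follows essentially the same convexity route as the paper: apply the linear functional $S$ to a decomposition $p=w\,p_{NC}+(1-w)\,p_C$, bound $S(p_{NC})\le M_{NC}(S)$ and $S(p_C)\le M_C(S)$, and conclude $w\bigl(M_C(S)-M_{NC}(S)\bigr)\le 0$, forcing $w=0$. The only cosmetic differences are that the paper runs the inequality directly for an arbitrary decomposition rather than by contradiction, and separately disposes of the trivial case $s_{NC}(Q)=\emptyset$.
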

\begin{proof}
If $s_{NC}(Q)$ is empty, it is obvious that $W_{NC}(p)=0$.\par

Otherwise, $S(p)=M_{C}(S)>M_{NC}(S)$ for some formula $S$. If $p=wp_{NC}+(1-w)p_C$, substituting the two sides of the equation into $S$, we have $S(p)=wS(p_{NC})+(1-w)S(p_C)=M_C(S)\leq wM_{NC}(S)+(1-w)M_C(S)$. Thus $w(M_C(S)-M_{NC}(S))\leq 0$. Therefore $w=0$, so $W_{NC}(p)=0$.
\end{proof}

Following the proof of theorem \ref{full-maximal}, one can get $W_{NC}(p)\leq\frac{M_C(S)-M_Q(S)}{M_C(S)-M_{NC}(S)}$. Therefore $M_Q(S)=M_C(S)$ can deduce maximal contextuality. The relevant work can be found in \citep{Barrett2006Maximally}.\par

It is not clear whether the converse of theorem \ref{full-maximal} is true or not. In other words, for any maximal contextual state $p$, whether there is a NC inequality being attained its upper bound $M_{C}(S)$ ($M_{C}(S)>M_{NC}(S)$)? In \cite{Aolita2012Fully}, Aolita, et al. claimed that for Bell systems, $W_{NC}(p)=0$ if and only if a Bell inequality was attained the upper bound, but they did not present the proof. The relationship between the full, maximal and strong contextuality need further research.

\subsection{KS contextuality}\label{KS contextuality subsection}
In 1967, Kochen and Specker proved that the quantum mechanics does not admit hidden variables by introducing a set of 3-dimensional rank-1 projectors which does not admit global true-value assignment\citep{Kochen1967The}, and such a system is said to satisfy KS contextuality.\par

KS contextuality is one of the most well-known contexuality, which characterizes the features of quantum systems, but not the state. Some theories such as Topos quantum logic even regards KS contextuality as the general contextuality\citep{Isham1998Topos}.

\begin{definition}
$Q\in QS$ is said to be \textbf{KS contextual} if $s_{01}(Q)$ is empty.\label{def-KS}
\end{definition}

Using the language of Kochen and Specker, the definition \ref{def-KS} means that there is no homomorphism from $Q$ to the Boolean algebra $\mathbbm{2}=\{0,1\}$. It is obviously that

\begin{theorem}
If $Q\in QS$ is KS contextual, all $p\in s(Q)$ are strongly contextual.\label{KS-strong}
\end{theorem}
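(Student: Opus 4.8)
If $Q \in QS$ is KS contextual (i.e. $s_{01}(Q) = \emptyset$), then every $p \in s(Q)$ is strongly contextual.

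The plan is to observe that strong contextuality, as defined just above, is a statement universally quantified over the set $s_{01}(Q)$, and then to exploit the hypothesis $s_{01}(Q)=\emptyset$ directly. Concretely, recall that $p\in s(Q)$ is strongly contextual precisely when the following holds: for every $\delta\in s_{01}(Q)$ there exists an event $e\in Q$ with $\delta(e)=1$ but $p(e)=0$. The KS contextuality of $Q$ means exactly that $s_{01}(Q)$ is empty, so the universal quantifier ranges over the empty index set.

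The single key step is therefore the appeal to vacuous truth: since there is no $\delta\in s_{01}(Q)$ at all, the conditional ``for every $\delta\in s_{01}(Q)$, \dots'' is satisfied automatically, with no existential witness $e$ ever needing to be produced. This holds uniformly for every $p$, so I would simply fix an arbitrary $p\in s(Q)$ and note that the defining condition for strong contextuality is met trivially. One may also remark, for completeness, that the conclusion is unaffected if $s(Q)$ itself happens to be empty, since then the claim ``all $p\in s(Q)$ are strongly contextual'' is again vacuously true over $p$.

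There is essentially no obstacle here; the content of the theorem lies entirely in the compatibility of the two definitions (\textbf{Definition} of KS contextuality via emptiness of $s_{01}(Q)$, and strong contextuality via a universal quantifier over $s_{01}(Q)$), which is why the preceding text already labels the result as obvious. The only point requiring any care is to make explicit that strong contextuality is genuinely a $\forall$-statement over $s_{01}(Q)$, rather than one asserting the \emph{existence} of some $\delta$, so that emptiness of $s_{01}(Q)$ yields the conclusion rather than its negation. Once that logical form is pinned down, the proof reduces to a one-line invocation of vacuous satisfaction.
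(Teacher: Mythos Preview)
Your argument is correct and matches the paper's own proof, which is simply ``Derive from the relevant definitions.'' You have merely spelled out the vacuous-truth observation that this line is gesturing at.
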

\begin{proof}
Derive from the relevant definitions.
\end{proof}

In the field of KS-proof, the notion ``KS assignment" is more famous than KS contextuality itself. If $S$ is a set of rank-1 projectors (or vectors), the KS assignment on $S$ is defined below\citep{Budroni2022Kochen}.

\begin{definition}
Let $S$ be a set of $k$-dimensional rank-1 projectors. A \textbf{KS assignment} on $S$ is a function $f:S\to\{0,1\}$ satisfying
\begin{itemize}
    \item $f(\hat{P}_1)f(\hat{P}_2)=0$ if $\hat{P}_1,\hat{P}_2\in S$ are orthogonal (orthogonality).
    \item $\sum\limits_{i=1}^k f(\hat{P}_i)=1$ for mutually orthogonal $\{\hat{P}_i\}_{i=1}^k$ (completeness).
 \end{itemize}
\end{definition}

If $S$ does not admit KS assignments, it is called a ``KS set", which only contains rank-1 projectors or vectors. The earliest KS set introduced by Kochen and Specker consisted of 117 vectors\citep{Kochen1967The}. The proved minimal KS set is the CEG set\citep{Xu2020Proof} with 18 vectors. The 3-dimensional KS sets are also called ``KS systems". The minimal KS system has not been found, but it contains at least 23 vectors\citep{Li2022An}.

If $S$ is a KS set, the quantum system $Q$ generated by $S$ is KS contextual, because any $p\in s_{01}(Q)$ can be restricted to a KS assignment on $S$. However, the converse is not true. Recall the CEG set in Fig. \ref{CEG}. If we remove the vector $(1,0,0,0)$ from the CEG set to get a 17-vectors set, the new set and CEG set generate an identical KS contextual quantum system. However, the 17-vectors set is not a KS set, which has a KS assignment shown in Fig. \ref{CEG17}.

\begin{figure}[H]
    \centering
    \includegraphics[width=0.5\linewidth]{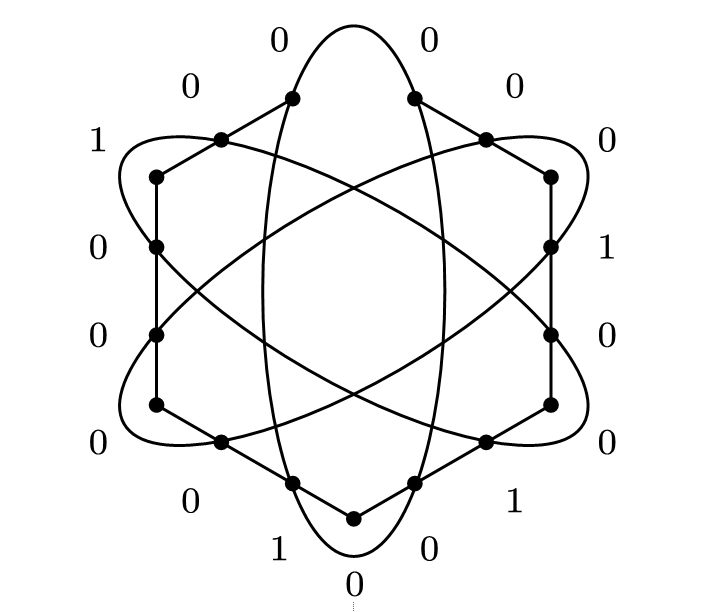}
    \caption{A KS assignment on the 17-vectors set}
    \label{CEG17}
\end{figure}

Note that the KS assignment in Fig. \ref{CEG17} violates the local consistency, because it forces the removed $(1,0,0,0)$ to be assigned 0 and 1 in two distinct contexts, so it is not a restriction of a state.

This example indicates that the nonexistence of KS assignments is strictly stronger than the KS contextuality. In other words, the KS contextual systems can be not generated by KS sets. Therefore, it is possible to find a set simpler than the CEG set to present KS contextuality, especially for the general projectors. The question of finding the minimal KS contextual system is valuable, because fewer projectors imply a simpler setup in experiments.

\subsection{State-independent contextuality}

Just as the KS contextuality does, the state-independent contextuality (SIC)\citep{Budroni2022Kochen} also characterizes the features of specific quantum systems. Nevertheless, SIC cannot be detached from the quantum states.

\begin{definition}
$Q\in QS$ is said to be \textbf{state-independently contextual (SI contextual)} if all $\rho\in qs(Q)$ are contextual.
\end{definition}

If we say that ``all $p\in s(Q)$ are contextual", it means that $s_{NC}(Q)$ is empty, so $Q$ is KS contextual. However, there exist quantum systems which are SI contextual but not KS contextual. A significant example is the Yu-Oh set\citep{Yu2012State}. It consists of 13 3-dimensional vectors.

\begin{figure}[H]
\centering
\includegraphics[width=0.5\linewidth]{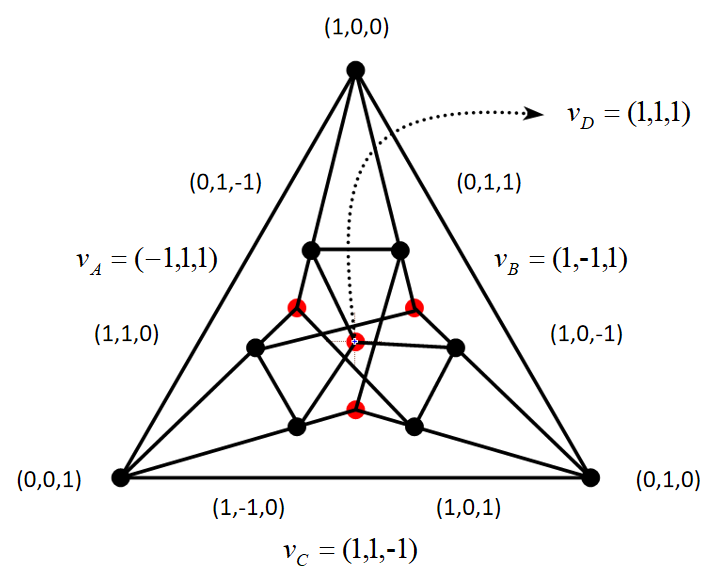}
\caption{The Yu-Oh set}
\label{Yu-Oh}
\end{figure}

 Let $Q_{Yu-Oh}$ be the quantum system generated by Yu-Oh set. It is easy to verify that $s_{01}(Q_{Yu-Oh})$ is not empty. However, Yu and Oh showed that $p(v_A)+p(v_B)+p(v_C)+p(v_D)\leq1$ for all $p\in s_{NC}(Q_{Yu-Oh})$, but  $\rho(v_A)+\rho(v_B)+\rho(v_C)+\rho(v_D)=\frac{4}{3}$ for all $\rho\in qs(Q_{Yu-Oh})$ because $|v_A\rangle\langle v_A|+|v_B\rangle\langle v_B|+|v_C\rangle\langle v_C|+|v_D\rangle\langle v_D|=\frac{4}{3}\hat{1}$. Therefore, $Q_{Yu-Oh}$ is SI contextual but not KS contextual.\par

Similar to the minimal KS set, the minimal rank-1 SIC set has been proved to contain at least 13 vectors\citep{Cabello2016Quantum}. Thus the Yu-Oh set is a minimal rank-1 SIC set. However, the minimal SI contextual system is still unknown\citep{Budroni2022Kochen}.

\section{The hierarchy of quantum contextuality}\label{sec-hierarchy}
The hierarchy of quantum contextuality is divided into two parts: the comparison between different quantum systems, and the the comparison between different quantum states on the same system. The hierarchy introduced by the sheaf theory approach\citep{Abramsky2011sheaf} only considers the latter, and  does not recognize the contextuality arguments such as the Yu-Oh set\citep{Budroni2022Kochen}.

If we use $s_{C}(Q)$, $s_{LC}(Q)$, $s_{SC}(Q)$, $s_{MC}(Q)$ and $s_{FC}(Q)$ to denote the contextual, logical contextual, strongly contextual, maximal contextual and fully contextual states on $Q\in QS$, the last section has shown that $s_{SC}(Q)\subseteq s_{LC}(Q)\subseteq s_{C}(Q)$, $s_{SC}(Q)\subseteq s_{MC}(Q)$ (equal for finite $Q$) and $s_{FC}(Q)\subseteq s_{MC}(Q)$. Therefore, the strong contextuality and the full contextuality are the strongest on a single quantum system.

In order to extend the contextuality hierarchy to capture the SIC argument, we prove that the state-independent strong contextuality is exactly the KS contextuality. Therefore, KS contextuality is the strongest when considering both the strength and proportion of contextual quantum states. If $Q\in QS$ is KS contextual, the theorem \ref{KS-strong} deduces that all $\rho\in qs(Q)$ are strongly contextual. For finite dimensional quantum systems, the converse is proved below.

\begin{theorem}
Let $Q$ be a quantum system on a finite dimensional Hilbert space $\mathcal{H}$. Then $Q$ is KS contextual if all $\rho\in qs(Q)$ are strongly contextual.\label{KS-SISC}
\end{theorem}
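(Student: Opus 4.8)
The plan is to prove the contrapositive. Since $Q$ being KS contextual means $s_{01}(Q)=\emptyset$ (Definition \ref{def-KS}), I would instead assume $s_{01}(Q)\neq\emptyset$ and exhibit a single quantum state $\rho\in qs(Q)$ that fails to be strongly contextual. Combined with the already-established Theorem \ref{KS-strong} (KS contextuality implies all states are strongly contextual), this delivers the full equivalence between KS contextuality and state-independent strong contextuality that Sect. \ref{sec-hierarchy} is after.

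The candidate is the maximally mixed state $\rho=\hat{1}/\dim\mathcal{H}$, which is a legitimate density operator precisely because $\mathcal{H}$ is finite dimensional, so $\rho\in qs(Q)$. First I would record its key feature: $\rho$ assigns a strictly positive probability to \emph{every} nonzero event. Indeed, for any $\hat{P}\in Q$ with $\hat{P}\neq\hat{0}$ one has $\rho(\hat{P})=tr(\rho\hat{P})=tr(\hat{P})/\dim\mathcal{H}\geq 1/\dim\mathcal{H}>0$, since $tr(\hat{P})$ is the rank of $\hat{P}$, which is at least $1$. Now fix any $\delta\in s_{01}(Q)$. Every event $e$ with $\delta(e)=1$ satisfies $e\neq\hat{0}$, because $\delta(\hat{0})=0$ by the state axioms of Definition \ref{def_state}; hence $\rho(e)>0$. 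Thus there is \emph{no} event $e$ with $\delta(e)=1$ and $\rho(e)=0$, so $\rho$ violates the defining condition of strong contextuality at this $\delta$, and therefore $\rho$ is not strongly contextual. This completes the contrapositive.

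The genuine subtlety, and the place where finite-dimensionality is really used, is best seen by contrast with a natural but flawed strategy: one might try to build a non-strongly-contextual witness from a pure state $|\psi\rangle\langle\psi|$, which would force $|\psi\rangle$ to lie outside the orthogonal complement of the range of every event $e$ with $\delta(e)=1$. Because $Q$ may contain infinitely (even uncountably) many maximal contexts, the atoms $a_C$ selected by $\delta$ (the unique atom of each maximal context $C$ with $\delta(a_C)=1$) may span infinitely many distinct subspaces, so the elementary ``a vector space is not a finite union of proper subspaces'' argument is not available, and no single $|\psi\rangle$ need exist. Passing to a full-support mixed state dissolves this obstacle entirely, since a positive-definite density operator is orthogonal to no nonzero projector; and finite-dimensionality is exactly what guarantees that such a unit-trace full-support state exists, namely $\hat{1}/\dim\mathcal{H}$. (In infinite dimensions $\hat{1}$ is no longer trace-class, which is why the hypothesis is stated for finite dimensional $\mathcal{H}$.)
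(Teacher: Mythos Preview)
Your proof is correct and follows essentially the same route as the paper: both argue the contrapositive by taking the maximally mixed state $\hat{1}/\dim\mathcal{H}$ and observing that it assigns strictly positive probability to every nonzero projector, so any $\delta\in s_{01}(Q)$ witnesses its failure to be strongly contextual. Your write-up is somewhat more detailed (and your remark on why finite-dimensionality is essential is a nice addition), but the core idea is identical.
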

\begin{proof}
Suppose that $s_{01}(Q)$ is not empty, then there exists $\delta\in s_{01}(Q)$. Let $dim(\mathcal{H})=d$. The maximally mixed state $\hat{1}/d\in qs(Q)$ is strongly contextual, so there exists $\hat{P}\in Q$ such that $\delta(\hat{P})=1$ and $tr(\hat{P}\cdot\hat{1}/d)=0$. It is impossible because $tr(\hat{P}\cdot\hat{1}/d)>0$ for all $\hat{P}\neq\hat{0}$. Thus $s_{01}(Q)$ is empty.
\end{proof}

Therefore, a finite dimensional quantum system is KS contextual iff it is state-independently strong contextual, iff $\hat{1}/d$ is strongly contextual, which provides a prospective approach to determine KS contextual systems. We conjecture that theorem \ref{KS-SISC} also holds for general quantum systems.

The Fig. \ref{hierarchy} shows the hierarchy of quantum contextuality with two dimensions. The horizontal axis represents the proportion of contextual quantum states, while the vertical axis represents the strength of contextuality. C represents the contextuality, NC the non-contextuality, FC, SC, MC, LC, SDC and SIC the full, strong, maximal, logical, state-dependent and state-independent contextuality. The data point $Q_{Yu-Oh}$ etc. represent the specific quantum systems. $Q_{KS}$ represents the KS contextual quantum systems, such as the CEG system.

\begin{figure}[H]
\centering
\includegraphics[width=0.8\linewidth]{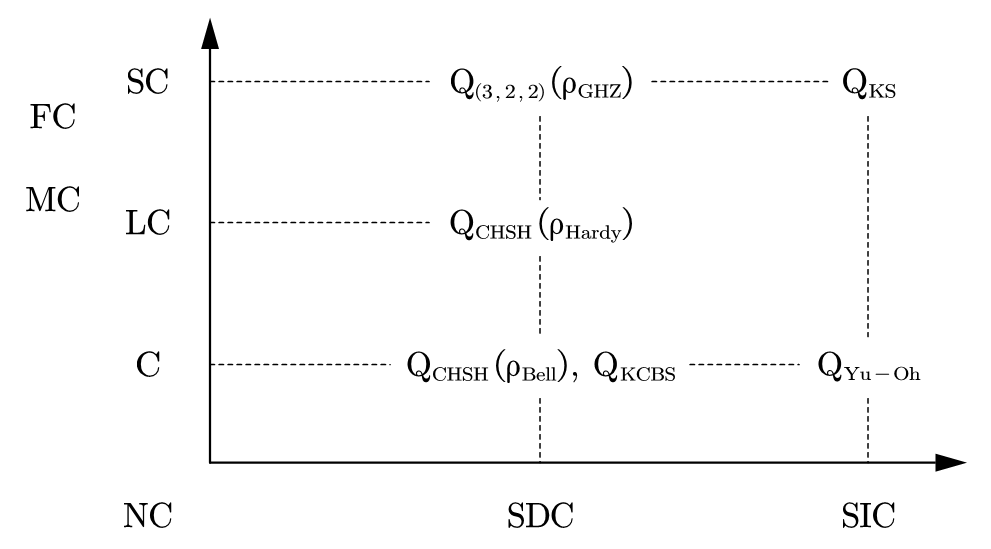}
\caption{The schematic illustration of quantum contextuality hierarchy (for finite dimensional quantum systems).}
\label{hierarchy}
\end{figure}

Some details in Fig. \ref{hierarchy} are still required further research, including the relation between FC and SC, or the relation between MC and LC. Furthermore, it has yet been verified whether $Q_{KCBS}$ or $Q_{Yu-Oh}$ contains logically contextual quantum states.

\section{Conclusion and future}\label{sec-conclusion}

We introduce two known observable-based contextuality theories, the sheaf theory approach and the exclusivity graph approach. The observable-based definitions and the neglect of local consistency or exclusivity principle have led to challenges faced by them. To overcome these challenges, we present an event-based contextuality theory with the $epBA$, which exactly describes the contextual systems with local consistency and logical exclusivity principle. Our theory formalizes the notion of contextuality (Theorem \ref{thm-contextuality}), precisely captures the structure of scenarios, and eliminates the ambiguity in the observable-based definitions.

By defining the quantum systems $QS\subseteq epBA$, we formalize a number of significant notions of quantum contextuality. The major quantum scenarios and models are precisely described in a unified mathematical framework, which allows us to extend the hierarchy of quantum contextuality to capture the SIC. Specifically, we prove that the KS contextuality is equivalent to the state-independent strong contextuality for finite dimensional quantum systems (Theorem \ref{KS-SISC}).

The $epBA$ enables the transformation of previously obscure contextuality issues into mathematical problems. We list some interesting questions for further research.

1. The minimal KS set is the CEG set\citep{Xu2020Proof}, but the minimal KS contextual system is still unknown. Note that the notion of KS set is not equivalent to the KS contextuality. In fact, the requirements for the KS sets are stronger.

2. The minimal rank-1 SIC set is the Yu-Oh set\citep{Cabello2016Quantum}, but the minimal SI contextual system is still unknown.

3. We have $s_{SC}(Q)\subseteq s_{MC}(Q)$ and $s_{FC}(Q)\subseteq s_{MC}(Q)$, while the relationship between SC and FC is not clear. We conjecture that $s_{SC}(Q)\subseteq s_{FC}(Q)$, which requires to find the NC inequalities that are attained maximum by any strongly contextual state, or to prove that the $\rho\in s_{SC}(Q)$ lie on the boundary of the polytope $s(Q)$. Similarly, the relationship between $MC$ and $LC$ is also unknown.

4. The comparison among the contextuality of different quantum systems remains a tough issue. Not only quantum states on a single quantum system witness different contextuality, but also it is not straightforward to assume that $\rho\in qs(Q)$ and $\rho'\in qs(Q')$ are equally strong, even though $\rho$ and $\rho'$ are both strongly contextual and so on. In fact, an ideal hierarchy may be linked to the advantages of quantum computation or quantum information processing, which still requires further investigation.

\bmhead{Acknowledgments}
The work was supported by National Natural Science Foundation of China (Grant No. 12371016, 11871083) and National Key R\&D Program of China (Grant No. 2020YFE0204200).

\bibliography{sn-bibliography}


\begin{thebibliography}{38}
\ifx \bisbn   \undefined \def \bisbn  #1{ISBN #1}\fi
\ifx \binits  \undefined \def \binits#1{#1}\fi
\ifx \bauthor  \undefined \def \bauthor#1{#1}\fi
\ifx \batitle  \undefined \def \batitle#1{#1}\fi
\ifx \bjtitle  \undefined \def \bjtitle#1{#1}\fi
\ifx \bvolume  \undefined \def \bvolume#1{\textbf{#1}}\fi
\ifx \byear  \undefined \def \byear#1{#1}\fi
\ifx \bissue  \undefined \def \bissue#1{#1}\fi
\ifx \bfpage  \undefined \def \bfpage#1{#1}\fi
\ifx \blpage  \undefined \def \blpage #1{#1}\fi
\ifx \burl  \undefined \def \burl#1{\textsf{#1}}\fi
\ifx \doiurl  \undefined \def \doiurl#1{\url{https://doi.org/#1}}\fi
\ifx \betal  \undefined \def \betal{\textit{et al.}}\fi
\ifx \binstitute  \undefined \def \binstitute#1{#1}\fi
\ifx \binstitutionaled  \undefined \def \binstitutionaled#1{#1}\fi
\ifx \bctitle  \undefined \def \bctitle#1{#1}\fi
\ifx \beditor  \undefined \def \beditor#1{#1}\fi
\ifx \bpublisher  \undefined \def \bpublisher#1{#1}\fi
\ifx \bbtitle  \undefined \def \bbtitle#1{#1}\fi
\ifx \bedition  \undefined \def \bedition#1{#1}\fi
\ifx \bseriesno  \undefined \def \bseriesno#1{#1}\fi
\ifx \blocation  \undefined \def \blocation#1{#1}\fi
\ifx \bsertitle  \undefined \def \bsertitle#1{#1}\fi
\ifx \bsnm \undefined \def \bsnm#1{#1}\fi
\ifx \bsuffix \undefined \def \bsuffix#1{#1}\fi
\ifx \bparticle \undefined \def \bparticle#1{#1}\fi
\ifx \barticle \undefined \def \barticle#1{#1}\fi
\bibcommenthead
\ifx \bconfdate \undefined \def \bconfdate #1{#1}\fi
\ifx \botherref \undefined \def \botherref #1{#1}\fi
\ifx \url \undefined \def \url#1{\textsf{#1}}\fi
\ifx \bchapter \undefined \def \bchapter#1{#1}\fi
\ifx \bbook \undefined \def \bbook#1{#1}\fi
\ifx \bcomment \undefined \def \bcomment#1{#1}\fi
\ifx \oauthor \undefined \def \oauthor#1{#1}\fi
\ifx \citeauthoryear \undefined \def \citeauthoryear#1{#1}\fi
\ifx \endbibitem  \undefined \def \endbibitem {}\fi
\ifx \bconflocation  \undefined \def \bconflocation#1{#1}\fi
\ifx \arxivurl  \undefined \def \arxivurl#1{\textsf{#1}}\fi
\csname PreBibitemsHook\endcsname

\bibitem[\protect\citeauthoryear{Bell}{1964}]{Bell1964On}
\begin{barticle}
\bauthor{\bsnm{Bell}, \binits{J.S.}}:
\batitle{On the {E}instein-{P}odolsky-{R}osen paradox}.
\bjtitle{Physics}
\bvolume{1},
\bfpage{195}--\blpage{200}
(\byear{1964})
\end{barticle}
\endbibitem

\bibitem[\protect\citeauthoryear{Kochen and Specker}{1967}]{Kochen1967The}
\begin{barticle}
\bauthor{\bsnm{Kochen}, \binits{S.}},
\bauthor{\bsnm{Specker}, \binits{E.P.}}:
\batitle{The problem of hidden variables in quantum mechanics}.
\bjtitle{J. Math. Mech.}
\bvolume{17},
\bfpage{59}--\blpage{87}
(\byear{1967})
\end{barticle}
\endbibitem

\bibitem[\protect\citeauthoryear{Brunner et~al.}{2014}]{Brunner2014Bell}
\begin{barticle}
\bauthor{\bsnm{Brunner}, \binits{N.}},
\bauthor{\bsnm{Cavalcanti}, \binits{D.}},
\bauthor{\bsnm{Pironio}, \binits{S.}},
\bauthor{\bsnm{Scarani}, \binits{V.}},
\bauthor{\bsnm{Wehner}, \binits{S.}}:
\batitle{Bell nonlocality}.
\bjtitle{Rev. Mod. Phys.}
\bvolume{86},
\bfpage{419}--\blpage{478}
(\byear{2014})
\doiurl{10.1103/RevModPhys.86.419}
\end{barticle}
\endbibitem

\bibitem[\protect\citeauthoryear{Budroni et~al.}{2022}]{Budroni2022Kochen}
\begin{barticle}
\bauthor{\bsnm{Budroni}, \binits{C.}},
\bauthor{\bsnm{Cabello}, \binits{A.}},
\bauthor{\bsnm{Guhne}, \binits{O.}},
\bauthor{\bsnm{Kleinmann}, \binits{M.}},
\bauthor{\bsnm{Larsson}, \binits{J.-A.}}:
\batitle{Kochen-specker contextuality}.
\bjtitle{Rev. Mod. Phys.}
\bvolume{94},
\bfpage{045007}
(\byear{2022})
\doiurl{10.1103/RevModPhys.94.045007}
\end{barticle}
\endbibitem

\bibitem[\protect\citeauthoryear{Klyachko et~al.}{2008}]{Alexander2008Simple}
\begin{barticle}
\bauthor{\bsnm{Klyachko}, \binits{A.A.}},
\bauthor{\bsnm{Can}, \binits{M.A.}},
\bauthor{\bsnm{Binicioglu}, \binits{S.}},
\bauthor{\bsnm{Shumovsky}, \binits{A.S.}}:
\batitle{Simple test for hidden variables in spin-1 systems}.
\bjtitle{Phys. Rev. Lett.}
\bvolume{101},
\bfpage{020403}
(\byear{2008})
\doiurl{10.1103/PhysRevLett.101.020403}
\end{barticle}
\endbibitem

\bibitem[\protect\citeauthoryear{Howard et~al.}{2014}]{Mark2014Contextuality}
\begin{barticle}
\bauthor{\bsnm{Howard}, \binits{M.}},
\bauthor{\bsnm{Wallman}, \binits{J.}},
\bauthor{\bsnm{Veitch}, \binits{V.}},
\bauthor{\bsnm{Emerson}, \binits{J.}}:
\batitle{Contextuality supplies the 'magic' for quantum computation}.
\bjtitle{Nature}
\bvolume{510}(\bissue{Jun.19 TN.7505}),
\bfpage{351}--\blpage{355}
(\byear{2014})
\end{barticle}
\endbibitem

\bibitem[\protect\citeauthoryear{Abramsky and
  Brandenburger}{2011}]{Abramsky2011sheaf}
\begin{botherref}
\oauthor{\bsnm{Abramsky}, \binits{S.}},
\oauthor{\bsnm{Brandenburger}, \binits{A.}}:
The sheaf-theoretic structure of non-locality and contextuality.
New Journal of Physics
\textbf{13}
(2011)
\end{botherref}
\endbibitem

\bibitem[\protect\citeauthoryear{Cabello et~al.}{2014}]{Adan2014Graph}
\begin{barticle}
\bauthor{\bsnm{Cabello}, \binits{A.}},
\bauthor{\bsnm{Severini}, \binits{S.}},
\bauthor{\bsnm{Winter}, \binits{A.}}:
\batitle{Graph-theoretic approach to quantum correlations}.
\bjtitle{Phys. Rev. Lett.}
\bvolume{112},
\bfpage{040401}
(\byear{2014})
\doiurl{10.1103/PhysRevLett.112.040401}
\end{barticle}
\endbibitem

\bibitem[\protect\citeauthoryear{Acin et~al.}{2015}]{Acin2015A}
\begin{barticle}
\bauthor{\bsnm{Acin}, \binits{A.}},
\bauthor{\bsnm{Fritz}, \binits{T.}},
\bauthor{\bsnm{Leverrier}, \binits{A.}},
\bauthor{\bsnm{Sainz}, \binits{A.B.}}:
\batitle{A combinatorial approach to nonlocality and contextuality}.
\bjtitle{Commun. Math. Phys.}
\bvolume{334},
\bfpage{533}--\blpage{628}
(\byear{2015})
\end{barticle}
\endbibitem

\bibitem[\protect\citeauthoryear{Xu et~al.}{2020}]{Xu2020Proof}
\begin{barticle}
\bauthor{\bsnm{Xu}, \binits{Z.-P.}},
\bauthor{\bsnm{Chen}, \binits{J.-L.}},
\bauthor{\bsnm{Guhne}, \binits{O.}}:
\batitle{Proof of the peres conjecture for contextuality}.
\bjtitle{Phys. Rev. Lett.}
\bvolume{124},
\bfpage{230401}
(\byear{2020})
\end{barticle}
\endbibitem

\bibitem[\protect\citeauthoryear{Cabello et~al.}{2016}]{Cabello2016Quantum}
\begin{barticle}
\bauthor{\bsnm{Cabello}, \binits{A.}},
\bauthor{\bsnm{Kleinmann}, \binits{M.}},
\bauthor{\bsnm{Portillo}, \binits{J.R.}}:
\batitle{Quantum state-independent contextuality requires 13 rays}.
\bjtitle{J. Phys. A: Math. Theor.}
\bvolume{49},
\bfpage{38}--\blpage{01}
(\byear{2016})
\end{barticle}
\endbibitem

\bibitem[\protect\citeauthoryear{Hardy}{1993}]{Hardy1993Nonlocality}
\begin{barticle}
\bauthor{\bsnm{Hardy}, \binits{L.}}:
\batitle{Nonlocality for two particles without inequalities for almost all
  entangled states}.
\bjtitle{Phys. Rev. Lett.}
\bvolume{71},
\bfpage{1665}--\blpage{1668}
(\byear{1993})
\doiurl{10.1103/PhysRevLett.71.1665}
\end{barticle}
\endbibitem

\bibitem[\protect\citeauthoryear{Greenberger
  et~al.}{1989}]{Greenberger1989Going}
\begin{bbook}
\bauthor{\bsnm{Greenberger}, \binits{D.M.}},
\bauthor{\bsnm{Horne}, \binits{M.A.}},
\bauthor{\bsnm{Zeilinger}, \binits{A.}}:
In: \beditor{\bsnm{Kafatos}, \binits{M.}} (ed.)
\bbtitle{Going Beyond Bell's Theorem},
pp. \bfpage{69}--\blpage{72}.
\bpublisher{Springer},
\blocation{Dordrecht}
(\byear{1989}).
\doiurl{10.1007/978-94-017-0849-4_10} .
\burl{https://doi.org/10.1007/978-94-017-0849-4_10}
\end{bbook}
\endbibitem

\bibitem[\protect\citeauthoryear{Yu and Oh}{2012}]{Yu2012State}
\begin{barticle}
\bauthor{\bsnm{Yu}, \binits{S.}},
\bauthor{\bsnm{Oh}, \binits{C.H.}}:
\batitle{State-independent proof of kochen-specker theorem with 13 rays}.
\bjtitle{Phys. Rev. Lett.}
\bvolume{108},
\bfpage{030402}
(\byear{2012})
\end{barticle}
\endbibitem

\bibitem[\protect\citeauthoryear{Kochen}{2015}]{Kochen2015Reconstruction}
\begin{barticle}
\bauthor{\bsnm{Kochen}, \binits{S.}}:
\batitle{A reconstruction of quantum mechanics}.
\bjtitle{Foundations of Physics}
\bvolume{45},
\bfpage{557}--\blpage{590}
(\byear{2015})
\end{barticle}
\endbibitem

\bibitem[\protect\citeauthoryear{Abramsky
  et~al.}{2015}]{Abramsky2015Contextuality}
\begin{bchapter}
\bauthor{\bsnm{Abramsky}, \binits{S.}},
\bauthor{\bsnm{Soares~Barbosa}, \binits{R.}},
\bauthor{\bsnm{Kishida}, \binits{K.}},
\bauthor{\bsnm{Lal}, \binits{R.}},
\bauthor{\bsnm{Mansfield}, \binits{S.}}:
\bctitle{Contextuality, cohomology and paradox}.
In: \beditor{\bsnm{Kreutzer}, \binits{S.}} (ed.)
\bbtitle{24th EACSL Annual Conference on Computer Science Logic (CSL 2015)}.
\bsertitle{Leibniz International Proceedings in Informatics (LIPIcs)},
vol. \bseriesno{41},
pp. \bfpage{211}--\blpage{228}.
\bpublisher{Schloss Dagstuhl -- Leibniz-Zentrum f{\"u}r Informatik},
\blocation{Dagstuhl, Germany}
(\byear{2015}).
\doiurl{10.4230/LIPIcs.CSL.2015.211} .
\burl{https://drops.dagstuhl.de/entities/document/10.4230/LIPIcs.CSL.2015.211}
\end{bchapter}
\endbibitem

\bibitem[\protect\citeauthoryear{Ramanathan
  et~al.}{2012}]{Ramanathan2012Generalized}
\begin{barticle}
\bauthor{\bsnm{Ramanathan}, \binits{R.}},
\bauthor{\bsnm{Soeda}, \binits{A.}},
\bauthor{\bsnm{Kurzy\'{n}ski}, \binits{P.}},
\bauthor{\bsnm{Kaszlikowski}, \binits{D.}}:
\batitle{Generalized monogamy of contextual inequalities from the
  no-disturbance principle}.
\bjtitle{Phys. Rev. Lett.}
\bvolume{109},
\bfpage{050404}
(\byear{2012})
\doiurl{10.1103/PhysRevLett.109.050404}
\end{barticle}
\endbibitem

\bibitem[\protect\citeauthoryear{Popescu and
  Rohrlich}{1994}]{Popescu1994Quantum}
\begin{barticle}
\bauthor{\bsnm{Popescu}, \binits{S.}},
\bauthor{\bsnm{Rohrlich}, \binits{D.}}:
\batitle{Quantum nonlocality as an axiom}.
\bjtitle{Foundations of Physic}
\bvolume{24},
\bfpage{379}--\blpage{385}
(\byear{1994})
\end{barticle}
\endbibitem

\bibitem[\protect\citeauthoryear{Fritz et~al.}{2013}]{Fritz2013Local}
\begin{barticle}
\bauthor{\bsnm{Fritz}, \binits{T.}},
\bauthor{\bsnm{Sainz}, \binits{A.B.}},
\bauthor{\bsnm{Augusiak}, \binits{R.}},
\bauthor{\bsnm{Brask}, \binits{J.B.}},
\bauthor{\bsnm{Chaves}, \binits{R.}},
\bauthor{\bsnm{Leverrier}, \binits{A.}},
\bauthor{\bsnm{Acín}, \binits{A.}}:
\batitle{Local orthogonality as a multipartite principle for quantum
  correlations}.
\bjtitle{Nature Communications}
\bvolume{4},
\bfpage{2263}
(\byear{2013})
\doiurl{10.1038/ncomms3263}
\end{barticle}
\endbibitem

\bibitem[\protect\citeauthoryear{Cabello}{2012}]{Adan2012Specker}
\begin{botherref}
\oauthor{\bsnm{Cabello}, \binits{A.}}:
Specker's fundamental principle of quantum mechanics
(2012).
\url{https://arxiv.org/abs/1212.1756}
\end{botherref}
\endbibitem

\bibitem[\protect\citeauthoryear{Abramsky and Barbosa}{2021}]{Abramsky2020The}
\begin{bchapter}
\bauthor{\bsnm{Abramsky}, \binits{S.}},
\bauthor{\bsnm{Barbosa}, \binits{R.S.}}:
\bctitle{The logic of contextuality}.
In: \beditor{\bsnm{Baier}, \binits{C.}},
\beditor{\bsnm{Goubault-Larrecq}, \binits{J.}} (eds.)
\bbtitle{29th EACSL Annual Conference on Computer Science Logic (CSL 2021)}.
\bsertitle{Leibniz International Proceedings in Informatics (LIPIcs)},
vol. \bseriesno{183},
pp. \bfpage{5}--\blpage{1518}.
\bpublisher{Schloss Dagstuhl -- Leibniz-Zentrum f{\"u}r Informatik},
\blocation{Dagstuhl, Germany}
(\byear{2021}).
\doiurl{10.4230/LIPIcs.CSL.2021.5} .
\burl{https://drops.dagstuhl.de/entities/document/10.4230/LIPIcs.CSL.2021.5}
\end{bchapter}
\endbibitem

\bibitem[\protect\citeauthoryear{Liu et~al.}{2025}]{Liu2025Atom}
\begin{botherref}
\oauthor{\bsnm{Liu}, \binits{S.}},
\oauthor{\bsnm{Wang}, \binits{Y.}},
\oauthor{\bsnm{Wang}, \binits{B.}},
\oauthor{\bsnm{Yan}, \binits{J.}},
\oauthor{\bsnm{Zhou}, \binits{H.}}:
Atom graph, partial boolean algebra and quantum contextuality.
Quantum Inf Process
\textbf{24}(12)
(2025)
\doiurl{10.1007/s11128-024-04632-2}
\end{botherref}
\endbibitem

\bibitem[\protect\citeauthoryear{Fritz and Chaves}{2013}]{Fritz2013Entropic}
\begin{barticle}
\bauthor{\bsnm{Fritz}, \binits{T.}},
\bauthor{\bsnm{Chaves}, \binits{R.}}:
\batitle{Entropic inequalities and marginal problems}.
\bjtitle{IEEE Transactions on Information Theory}
\bvolume{59},
\bfpage{803}--\blpage{817}
(\byear{2013})
\end{barticle}
\endbibitem

\bibitem[\protect\citeauthoryear{Abramsky et~al.}{2023}]{Abramsky2023Quantum}
\begin{barticle}
\bauthor{\bsnm{Abramsky}, \binits{S.}},
\bauthor{\bsnm{Cabello}, \binits{A.}},
\bauthor{\bsnm{Dzhafarov}, \binits{E.N.}},
\bauthor{\bsnm{Kurzy\'nski}, \binits{P.}}:
\batitle{Quantum contextuality, causality and freedom of choice}.
\bjtitle{Phil. Trans. R. Soc. A}
\bvolume{382},
\bfpage{20230009}
(\byear{2023})
\end{barticle}
\endbibitem

\bibitem[\protect\citeauthoryear{Abramsky
  et~al.}{2017}]{Abramsky2017Contextual}
\begin{barticle}
\bauthor{\bsnm{Abramsky}, \binits{S.}},
\bauthor{\bsnm{Barbosa}, \binits{R.S.}},
\bauthor{\bsnm{Mansfield}, \binits{S.}}:
\batitle{Contextual fraction as a measure of contextuality}.
\bjtitle{Phys. Rev. Lett.}
\bvolume{119},
\bfpage{050504}
(\byear{2017})
\end{barticle}
\endbibitem

\bibitem[\protect\citeauthoryear{Cabello et~al.}{1996}]{Cabello1997Bell}
\begin{barticle}
\bauthor{\bsnm{Cabello}, \binits{A.}},
\bauthor{\bsnm{Estebaranz}, \binits{J.}},
\bauthor{\bsnm{Garc\'ia-Alcaine}, \binits{G.}}:
\batitle{Bell-{K}ochen-{S}pecker theorem: A proof with 18 vectors}.
\bjtitle{Physics Letters A}
\bvolume{212}(\bissue{4}),
\bfpage{183}--\blpage{187}
(\byear{1996})
\doiurl{10.1016/0375-9601(96)00134-X}
\end{barticle}
\endbibitem

\bibitem[\protect\citeauthoryear{Cabello et~al.}{2010}]{Adan2010Contextuality}
\begin{botherref}
\oauthor{\bsnm{Cabello}, \binits{A.}},
\oauthor{\bsnm{Severini}, \binits{S.}},
\oauthor{\bsnm{Winter}, \binits{A.}}:
(non-)contextuality of physical theories as an axiom
(2010)
{\href{https://arxiv.org/abs/1010.2163}{{arXiv:1010.2163}}}
{[quant-ph]}
\end{botherref}
\endbibitem

\bibitem[\protect\citeauthoryear{Van Den~Berg and
  Heunen}{2012}]{Van2012Noncommutativity}
\begin{barticle}
\bauthor{\bsnm{Van Den~Berg}, \binits{B.}},
\bauthor{\bsnm{Heunen}, \binits{C.}}:
\batitle{Noncommutativity as a colimit}.
\bjtitle{Applied categorical structures}
\bvolume{20}(\bissue{4}),
\bfpage{393}--\blpage{414}
(\byear{2012})
\end{barticle}
\endbibitem

\bibitem[\protect\citeauthoryear{Birkhoff and Neumann}{1936}]{Birkhoff1936The}
\begin{barticle}
\bauthor{\bsnm{Birkhoff}, \binits{G.}},
\bauthor{\bsnm{Neumann}, \binits{J.V.}}:
\batitle{The logic of quantum mechanics}.
\bjtitle{Annals of Mathematics}
\bvolume{37}(\bissue{4}),
\bfpage{823}--\blpage{843}
(\byear{1936}).
Accessed 2023-04-01
\end{barticle}
\endbibitem

\bibitem[\protect\citeauthoryear{Doering}{2011}]{Doering2010Topos}
\begin{barticle}
\bauthor{\bsnm{Doering}, \binits{A.}}:
\batitle{Topos quantum logic and mixed states}.
\bjtitle{Electronic Notes in Theoretical Computer Science}
\bvolume{270}(\bissue{2}),
\bfpage{59}--\blpage{77}
(\byear{2011})
\doiurl{10.1016/j.entcs.2011.01.023} .
\bcomment{Proceedings of the 6th International Workshop on Quantum Physics and
  Logic (QPL 2009)}
\end{barticle}
\endbibitem

\bibitem[\protect\citeauthoryear{Clauser et~al.}{1969}]{Clauser1969Proposed}
\begin{barticle}
\bauthor{\bsnm{Clauser}, \binits{J.F.}},
\bauthor{\bsnm{Horne}, \binits{M.A.}},
\bauthor{\bsnm{Shimony}, \binits{A.}},
\bauthor{\bsnm{Holt}, \binits{R.A.}}:
\batitle{Proposed experiment to test local hidden-variable theories}.
\bjtitle{Phys. Rev. Lett.}
\bvolume{23},
\bfpage{880}--\blpage{884}
(\byear{1969})
\doiurl{10.1103/PhysRevLett.23.880}
\end{barticle}
\endbibitem

\bibitem[\protect\citeauthoryear{de~Silva}{2017}]{Silva2017Graph}
\begin{barticle}
\bauthor{\bsnm{Silva}, \binits{N.}}:
\batitle{Graph-theoretic strengths of contextuality}.
\bjtitle{Phys. Rev. A}
\bvolume{95},
\bfpage{032108}
(\byear{2017})
\end{barticle}
\endbibitem

\bibitem[\protect\citeauthoryear{Amselem
  et~al.}{2012}]{Amselem2012Experimental}
\begin{barticle}
\bauthor{\bsnm{Amselem}, \binits{E.}},
\bauthor{\bsnm{Danielsen}, \binits{L.E.}},
\bauthor{\bsnm{L\'opez-Tarrida}, \binits{A.J.}},
\bauthor{\bsnm{Portillo}, \binits{J.R.}},
\bauthor{\bsnm{Bourennane}, \binits{M.}},
\bauthor{\bsnm{Cabello}, \binits{A.}}:
\batitle{Experimental fully contextual correlations}.
\bjtitle{Phys. Rev. Lett.}
\bvolume{108},
\bfpage{200405}
(\byear{2012})
\end{barticle}
\endbibitem

\bibitem[\protect\citeauthoryear{Elitzur et~al.}{1992}]{Elitzur1992Quantum}
\begin{barticle}
\bauthor{\bsnm{Elitzur}, \binits{A.C.}},
\bauthor{\bsnm{Popescu}, \binits{S.}},
\bauthor{\bsnm{Rohrlich}, \binits{D.}}:
\batitle{Quantum nonlocality for each pair in an ensemble}.
\bjtitle{Physics Letters A}
\bvolume{162}(\bissue{1}),
\bfpage{25}--\blpage{28}
(\byear{1992})
\doiurl{10.1016/0375-9601(92)90952-I}
\end{barticle}
\endbibitem

\bibitem[\protect\citeauthoryear{Aolita et~al.}{2012}]{Aolita2012Fully}
\begin{barticle}
\bauthor{\bsnm{Aolita}, \binits{L.}},
\bauthor{\bsnm{Gallego}, \binits{R.}},
\bauthor{\bsnm{Ac\'{\i}n}, \binits{A.}},
\bauthor{\bsnm{Chiuri}, \binits{A.}},
\bauthor{\bsnm{Vallone}, \binits{G.}},
\bauthor{\bsnm{Mataloni}, \binits{P.}},
\bauthor{\bsnm{Cabello}, \binits{A.}}:
\batitle{Fully nonlocal quantum correlations}.
\bjtitle{Phys. Rev. A}
\bvolume{85},
\bfpage{032107}
(\byear{2012})
\doiurl{10.1103/PhysRevA.85.032107}
\end{barticle}
\endbibitem

\bibitem[\protect\citeauthoryear{Barrett et~al.}{2006}]{Barrett2006Maximally}
\begin{barticle}
\bauthor{\bsnm{Barrett}, \binits{J.}},
\bauthor{\bsnm{Kent}, \binits{A.}},
\bauthor{\bsnm{Pironio}, \binits{S.}}:
\batitle{Maximally nonlocal and monogamous quantum correlations}.
\bjtitle{Phys. Rev. Lett.}
\bvolume{97},
\bfpage{170409}
(\byear{2006})
\doiurl{10.1103/PhysRevLett.97.170409}
\end{barticle}
\endbibitem

\bibitem[\protect\citeauthoryear{Isham and Butterfield}{1998}]{Isham1998Topos}
\begin{barticle}
\bauthor{\bsnm{Isham}, \binits{C.J.}},
\bauthor{\bsnm{Butterfield}, \binits{J.}}:
\batitle{Topos perspective on the kochen-specker theorem: I. quantum states as
  generalized valuations}.
\bjtitle{International Journal of Theoretical Physics}
\bvolume{37}(\bissue{11}),
\bfpage{2669}--\blpage{2734}
(\byear{1998})
\end{barticle}
\endbibitem

\bibitem[\protect\citeauthoryear{Li et~al.}{2022}]{Li2022An}
\begin{bchapter}
\bauthor{\bsnm{Li}, \binits{Z.}},
\bauthor{\bsnm{Bright}, \binits{C.}},
\bauthor{\bsnm{Ganesh}, \binits{V.}}:
\bctitle{An sc-square approach to the minimum kochen–specker problem}.
In: \bbtitle{The 7th International Workshop on Satisfiability Checking and
  Symbolic Computation}
(\byear{2022})
\end{bchapter}
\endbibitem

\end{thebibliography}

\end{document}